\newtheorem{theorem}{Theorem}
\newtheorem*{theorem*}{Theorem}
\newtheorem{lemma}{Lemma}
\newtheorem{remark}{Remark}
\newtheorem{definition}{Definition}
\newtheorem{contribution}{Contribution}
\newcommand{\s}[1]{#1}
\newcommand{\mpb}{{\textup{MPB}}\xspace}
\newcommand{\efx}{{\textup{EFX}}\xspace}
\newcommand{\pefx}{{\textup{pEFX}}\xspace}
\newcommand{\efone}{{\textup{EF1}}\xspace}
\newcommand{\pefk}{{\textup{pEF}$k$}\xspace}
\newcommand{\po}{{\textup{PO}}\xspace}
\newcommand{\fpo}{{\textup{fPO}}\xspace}
\newcommand{\x}{X}
\newcommand{\y}{Y}
\newcommand{\p}{p}
\newcommand{\N}{\mathbb N}
\newcommand{\R}{\mathbb R}
\title{Existence of 2-EFX Allocations of Chores\thanks{This work was supported by NSF Grant CCF-2334461.}}
\author{Jugal Garg\footnote{University of Illinois at Urbana-Champaign, USA} \\
\texttt{\small jugal@illinois.edu} 
\and
Aniket Murhekar\footnote{University of Illinois at Urbana-Champaign, USA}\\
\texttt{\small aniket2@illinois.edu}
}
\date{}
\begin{document}
\renewcommand{\arraystretch}{1.2}
\maketitle
\thispagestyle{empty}

\begin{abstract}

We study the fair division of indivisible chores among agents with additive disutility functions. We investigate the existence of allocations satisfying the popular fairness notion of envy-freeness up to any chore (EFX), and its multiplicative approximations. The existence of $4$-EFX allocations was recently established in~\cite{GMQ25}. We improve this guarantee by proving the existence of $2$-EFX allocations for all instances with additive disutilities. This approximation was previously known only for restricted instances such as bivalued disutilities~\cite{wu25bivalued} or three agents~\cite{afshinmehr2024approximateefxexacttefx}.

We obtain our result by providing a general framework for achieving approximate-EFX allocations. The approach  begins with a suitable initial allocation and performs a sequence of local swaps between the bundles of envious and envied agents. 
For our main result, we begin with an initial allocation that satisfies envy-freeness up to one chore (EF1) and Pareto-optimality (PO); the existence of such an allocation was recently established in a major breakthrough by~\citet{mahara2025efonepo}. We further demonstrate the strength and generality of our framework by giving simple and unified proofs of existing results, namely (i) $2$-EFX for bivalued instances \cite{wu25bivalued}, (ii) 2-EFX for three agents \cite{afshinmehr2024approximateefxexacttefx}, (iii) EFX when the number of chores is at most twice the number of agents \cite{mahara2023efxmatching}, and (iv) $4$-EFX for all instances \cite{GMQ25}. We expect this framework to have broader applications in approximate-EFX due to its simplicity and generality.
\end{abstract}


\section{Introduction}

Fair allocation is a fundamental problem studied extensively across the fields of computer science, economics, mathematics, and multi-agent systems. It is primarily concerned with finding \textit{fair} allocations of resources or responsibilities to agents with heterogeneous preferences. The allocation of indivisible items has received increasing attention in recent years, as it models several important real-world problems like course allocation, inheritance division, and task assignment. These problems can be broadly categorized depending on whether the items are \textit{goods}, i.e., provide utility or value to agents, or \textit{chores}, i.e., provide disutility or cost to agents.

In this paper, we study the fair division problem with indivisible chores where agents have \textit{additive disutility} functions. That is, the disutility an agent incurs by doing a set of chores equals the sum of disutilities of chores in that set. \textit{Envy-freeness} (EF) \cite{foleyEF} is one of the most fundamental notion of fairness. An allocation $X = (X_1, X_2, \dots, X_n)$ is said to be envy-free if every agent $i$ prefers their own bundle $X_i$ to the bundle $X_j$ assigned to any other agent $j$. Unfortunately, EF allocations are not guarantees to exist due to indivisibility ---  consider allocating a single chore among two agents. This motivated the need to adapt envy-freeness to the discrete setting. Among various  relaxations proposed, the notion closest to EF is \textit{envy-freeness up to any chore} (EFX). In an EFX allocation, every agent prefers their bundle to the bundle of another agent, after the removal of any chore from their own bundle. That is, $X$ is EFX if every agent $i$ prefers the bundle $X_i \setminus \{c\}$ to the bundle $X_j$ of any other agent $j$, for \textit{every} chore $c\in X_i$. Since EFX is a natural and compelling fairness notion in the discrete setting, investigating the existence of EFX is considered an important problem in discrete fair division \cite{procaccia20,caragiannis2019mnwacm}.

However, the existence of EFX allocations remains open for general instances with additive disutilities, even when there are only three agents. Several important works have proved the existence of EFX allocations under specific restricted domains, such as two agents or when the number of chores is at most twice the number of agents \cite{mahara2023efxmatching}; see \cref{sec:related-work} for an expanded discussion of known results. Another popular approach is to investigate the existence of \textit{approximately}-EFX allocations for general instances, without imposing any restrictions on the instance.  In a $\lambda$-EFX allocation, where $\lambda\ge 1$, the disutility of each agent $i$ after removing any chore from her own bundle is at most $\lambda$ times the disutility of $i$ for the bundle assigned to any other agent $j$.  Naturally, $\lambda=1$ corresponds to exactly EFX allocations, and $\lambda$-EFX allocations with smaller $\lambda$ are considered fairer and closer to EFX than allocations with larger $\lambda$. 

The existence of approximate-EFX allocations of chores has been investigated in several works \cite{zhou22efx,christodoulos24chores,afshinmehr2024approximateefxexacttefx,GMQ25,wu25bivalued}. The current best-known approximation is the existence of $4$-EFX allocations for all chore division instances, as established in~\cite{GMQ25}. They obtain this result by defining and proving the existence of fractional solutions called earning restricted (ER) equilibria, rounding an ER equilibrium to a desirable integral allocation, and designing involved procedures that transfer chores between agents until obtaining a $4$-EFX. Our main contribution is to significantly improve this approximation factor to $2$, through a simpler approach.

\begin{contribution}
For any chore division instance with additive disutilities, a $2$-\efx allocation always exists.
\end{contribution}

Prior to our work, the existence of $2$-\efx allocations was known only under special cases, such as bivalued disutilities \cite{wu25bivalued} or $n=3$ agents \cite{afshinmehr2024approximateefxexacttefx}. For $n=3$ agents, the authors remark ``\emph{our proof relies on an extensive case study; we raise the question of whether a more elegant proof exists.}" We answer this question positively, as our result subsumes and extends theirs to any number of agents, through a simpler approach.
Our approach is to begin with a suitable initial allocation, and perform a series of operations that involve swapping subsets of chores between envious and envied agents, until the allocation is $2$-EFX. 

The desired initial allocation satisfies a weaker notion of fairness called envy-freeness up to one chore (EF1) and the efficiency notion of Pareto-optimality (PO). An allocation is said to be EF1 if every agent $i$ prefers their bundle to that of any other agent $j$ after the removal of \textit{some} chore from their own bundle. Thus, an EFX allocation is EF1, but not vice versa. It is known that EF1 allocations can be computed in polynomial time \cite{bhaskar2020chores}. An allocation is considered PO if there is no other allocation in which some agent receives strictly lower disutility and no agent receives strictly higher disutility. The existence of allocations of chores that are simultaneously EF1 and PO was considered a challenging open problem, until its recent resolution through a remarkable breakthrough by \citet{mahara2025efonepo}. They proved that there always exists an integral allocation $X$ and a set of chore prices $\p$ such that (i) agents are only assigned chores that minimize their disutility-to-price ratio, and (ii) the allocation is \textit{price}-EF1, i.e., EF1 in terms of prices. The first condition ensures the allocation is PO (see \cref{thm:fpo}), and the second condition is stronger than EF1 (see \cref{lem:pEF1impliesEF1}). Using such an allocation as the starting point, we systematically swap subsets of bundles between agents to attain an $2$-EFX allocation.

As noted earlier, a similar approach was utilized by \cite{GMQ25} to obtain a $4$-EFX allocation by starting with an allocation that satisfies weaker guarantees than price-EF1, which was obtained by rounding a fractional ER equilibrium. We distill the core idea of the approach and find that it has much broader applicability. 
Our second contribution is to develop a general, unifying framework for obtaining approximate-EFX allocations of chores.

\begin{contribution}
We give a general framework for obtaining approximate-\efx allocations of chores to agents with additive preferences.
\end{contribution}

Our framework has two components: (1) obtaining a suitable initial allocation $X$ that satisfies certain properties, and (2) beginning with $X$, iteratively performing \textit{chore swaps} until an approximate-EFX allocation is reached. We briefly explain the two components. 

We term allocations that are suitable for finding a $\lambda$-EFX allocation as $\lambda$-EFX-\textit{friendly} allocations (formally defined in \cref{def:efx-friendly}). For a $\lambda$-EFX-friendly allocation $X$, there exists is a partition of agents into sets $N_0$ and $N_H$ such that agents in $N_0$ are $\lambda$-EFX, while agents in $N_H$ may not be. Intuitively, this implies that each envious agent $i\in N_H$ has a ``high" disutility chore $j_i$. Let $H$ be the set of such high disutility chores. We require that for all agents $i\in N$, $d_i(X_i\setminus  H) \le \lambda\cdot d_i(j)$ for all $j\in H$. That is, every agent strongly prefers their own bundle of ``low" disutility chores over \textit{any} chore in $H$. 

Next, we design an algorithm (\cref{alg:main}) that takes as input a $\lambda$-EFX-friendly allocation $X$ and iteratively performs \textit{chore swaps} between an envious agent in $N_H$ and envied agents. In each iteration, we consider an agent $i\in N_H$ who is not $\lambda$-EFX in the current allocation $Y$. Let $\ell$ be the agent $i$ envies the most. In an $(i,\ell)$ chore swap, agent $i$ picks up the entire bundle $Y_\ell$ of agent $\ell$ and transfers the chore $j_i$ to $\ell$. We prove that if we perform such swaps in a carefully chosen order, we obtain a $\lambda$-EFX allocation in at most $n$ swaps. We describe and analyze the framework in \cref{sec:framework}.

Our framework is useful as it is simple, algorithmic, and provides a unified method of obtaining approximate-EFX allocations of chores. Several works on approximate-EFX \cite{zhou22efx,christodoulos24chores,afshinmehr2024approximateefxexacttefx,GMQ25,wu25bivalued} rely on extensive case analysis and novel, instance-specific algorithmic techniques. By using our unified framework, we essentially reduce the problem of finding a $\lambda$-EFX allocation to that of finding a $\lambda$-EFX-friendly allocation. In addition to the state-of-the-art existence result for $2$-\efx which subsumes the same result for $n=3$ agents \cite{afshinmehr2024approximateefxexacttefx}, our framework provides simple, clean, and interpretable proofs of known results in a unified manner, as listed below. 
\begin{itemize}
\item Existence of $4$-EFX allocations \cite{GMQ25}. 
\item Polynomial time algorithm for computing a $(2-1/k)$-EFX allocation in a \textit{bivalued} instance, where all chore disutilities are either $1$ or $k$, for some $k\ge1$ \cite{wu25bivalued}. 
\item Polynomial time algorithm for computing an EFX allocation when the number of chores is at most twice the number of agents \cite{mahara2023efxmatching,GMQ25}.
\end{itemize}

Moreover, our framework is algorithmic. For obtaining a constant-approximate EFX allocation, we can either begin with a price-EF1 and PO allocation or by rounding the fractional earning restricted (ER) equilibrium. The proof of existence of a price-EF1 and PO allocation \cite{mahara2025efonepo} crucially uses a non-constructive fixed-point argument to find the right set of chore prices, which does not lead to efficient computation. Further, it is not known if ER equilibria can be computed in polynomial time, although they can be computed fast in practice using the Lemke's scheme for solving a linear complementarity problem (LCP). Polynomial-time algorithms for either of these two problems would immediately imply a polynomial-time algorithm for computing a constant-approximate EFX allocation. Lastly, we note that our approach is simple. In particular, we can simplify certain components of the algorithm of \cite{GMQ25} using our framework (see \cref{rem:4efxsimplification}). 

For the above reasons, we believe that our framework will find further applications in computing approximate-EFX allocations, both for general instances and in restricted domains.

\subsection{Other Related Work}\label{sec:related-work}
We mention other related work that is most relevant to \efx and \efone and \po allocations. 

\paragraph{Approximate-EFX for Chores.} We first discuss results pertaining to all instances with additive disutilities. The first non-trivial result for approximate-EFX was a polynomial time algorithm for computing an $O(n^2)$-EFX allocation for $n$ agents \cite{zhou22efx}. This was improved to $4$-EFX in \cite{GMQ25}. For superadditive disutilities, \cite{efxnonexistence2024} showed that EFX allocations do not always exist. 

Second, we discuss exact EFX allocations for restricted domains. EFX allocations exist and can be computed in polynomial time for instances with two agents, instances where agents have an identical preference order (IDO) over the chores \cite{boli2022wpropx}, or two types of chores \cite{aziz2022twotypes}. \citet{mahara2023efxmatching} gave matching-based algorithms for computing EFX allocations for instances where (i) the number of chores is at most twice the number of agents, (ii) all but one agent have IDO disutility functions \cite{mahara2023efxmatching}, and (iii) there are three agents with 2-ary disutilities \cite{mahara2023efxmatching}. For bivalued instances, EFX and PO allocations are known for $n=3$ agents \cite{GMQ23chores} or when the number of chores is at most twice the number of agents \cite{GMQ25}.

Third, we discuss approximate-EFX allocations for restricted domains. For $n=3$ agents, $2$-EFX allocations were known to exist \cite{afshinmehr2024approximateefxexacttefx}. For bivalued instances, \citet{zhou22efx} showed the existence of $(n-1)$-EFX allocations. This was improved to $3$-EFX and PO \cite{GMQ25}, which was recently improved to slightly better than $2$-EFX \cite{wu25bivalued}.

The existence of EF1 and PO allocations of chores was recently proved by \cite{mahara2025efonepo} using a novel fixed-point argument. Polynomial time algorithms are known for two agents \cite{aziz2019chores}, three agents \cite{GMQ23chores}, three types of agents \cite{GMQ24wef1po}, bivalued disutilities \cite{ebadian2021bivaluedchores,Garg_Murhekar_Qin_2022}, and two types of chores \cite{aziz2022twotypes,GMQ23chores}. \citet{GMQ25} proved the existence of 2EF2 and PO and $(n-1)$-EF1 and PO allocations for all instances. In an $\alpha$-EF$k$ allocation, for every agent $i$, the disutility of $i$ from her own bundle is at most $\alpha$ times the disutility from another agent's bundle after the removal of $k$ chores from $i$'s bundle. The above results are accompanied by polynomial time algorithms for constant number of agents, however a polynomial time algorithm for general number of agents remains open.

\paragraph{Results on Goods.} (Approximate)-EFX allocations have also been studied in the case of goods. The best-known result pertaining to all instances with additive utilities is the existence of a 0.618-EFX allocation. This approximation ratio has been improved in certain special cases \cite{barman2023parameterizedguaranteesenvyfreeallocations, amanatidis2024pushingfrontierapproximateefx}. The existence of exact EFX allocations is known for two agents, identically ordered (IDO) instances \cite{plaut2018efx}, three agents \cite{chaudhury2020efx}, two types of agents~\cite{mahara21efx}, two types of goods~\cite{gorantla23efxmultiset} and three types of agents~\cite{v2024efxexiststypesagents}. However, the existence of EFX allocations is open for $n\ge 4$ agents. Another relaxation for achieving EFX allocations is charity, where some goods are left unallocated; see e.g.,~\cite{caragiannis2019efxmnw,efxcharity2021,berger2022efx,akrami23efxsimpler}.

The existence of EF1 and PO allocations is known for goods by the seminal work of \cite{caragiannis16nsw-ef1} who showed that an allocation that maximizes the Nash welfare is both EF1 and PO. However, computing such an allocation is hard to approximate  \cite{lee2015nsw-apx,garg2017nswhardness}. \cite{Barman18FFEA} gave a pseudo-polynomial time algorithm for computing an EF1 and PO allocation. Polynomial time computation is open in full generality but is known for binary instances \cite{barman2018binarynsw} and for a constant number of agents \cite{mahara2024polynomialtimealgorithmfairefficient,GM24jair}. For bivalued instances, an EFX and PO allocation can be computed in polynomial time \cite{GM23TCS}.

We also refer the reader to detailed surveys \cite{aziz2022survey,amanatidis2023survey,liu2024mixed} for other work on discrete fair division.

\section{Preliminaries}\label{sec:prelim}

Let $[n] = \{1, 2, \dots, n\}$, for any $n\in \N$.
\paragraph{Problem Instance.} The chore division problem is to allocate a set of $M = [m]$ of $m$ indivisible chores to a set of $N = [n]$ of $n$ agents. Agent $i\in N$ has a disutility function $d_i : 2^M \rightarrow \R_{\ge 0}$, where $d_i(S)$ is the disutility (cost) incurred by agent $i$ on doing the set $S$ of chores. We assume that agents have \textit{additive disutilities}, i.e., $d_i(S) = \sum_{j\in S} d_{ij}$, where $d_{ij} > 0$ is the disutility of chore $j$ for agent $i$. An instance is said to be \textit{bivalued} if there exist $a,b\in\R_{> 0}$ such that $d_{ij} \in \{a,b\}$ for all $i\in N, j\in M$.

\paragraph{Allocation.} An allocation $\x = (\x_1, \x_2, \ldots, \x_n)$ is a partition of the chores into $n$ bundles, where $X_i$ denotes the bundle allocated to agent $i\in N$. A \textit{fractional} allocation $Y\in [0,1]^{n\times m}$ allocates chores fractionally, with $Y_{ij}$ denoting the fraction of chore $j$ allocated to agent $i$. In a fractional allocation $y$, agent $i$ receives disutility $d_i(y_i) = \sum_{j\in M} d_{ij} y_{ij}$. We assume allocations are integral unless specified.

\paragraph{Fairness and Efficiency Notions.}
An allocation $\x$ is said to be:
\begin{itemize}
\item $\lambda$-Envy-free up to $k$ chores ($\lambda$-$\textup{EF}k$) if for all $i,h\in N$, there exists $S\subseteq \x_i$ with $|S|\le k$ such that $d_i(\x_i \setminus S) \leq \lambda\cdot d_i(\x_h)$. An allocation is simply denoted by EF$k$ if it is $1$-EF$k$.
\item $\lambda$-Envy-free up to any chore ($\lambda$-\efx) if for all $i,h\in N$ and $j\in \x_i$, $d_i(\x_i \setminus \{j\}) \leq \lambda\cdot d_i(\x_h)$. An allocation is simply denoted by EFX if it is $1$-EFX.
\item Pareto optimal (\po) if there is no allocation $\y$ that dominates $\x$. An allocation $\y$ dominates allocation $\x$ if for all $i \in N$, $d_i(\y_i) \leq d_i(\x_i)$, and there exists $h \in N$ such that $d_h(\y_h) < d_h(\x_h)$. 
\item Fractionally Pareto-optimal (\fpo) if there is no fractional allocation that dominates $\x$. An \fpo allocation is clearly \po, but not vice-versa.
\end{itemize}

\paragraph{Competitive Equilibrium.} In a Fisher market for chores, each agent $i$ aims to earn an amount $e_i > 0$ by performing chores in exchange for payment. Each chore $j$ has a price\footnote{We use the term price and payment equivalently} $p_j > 0$ which specifies the payment per unit of the chore. In a fractional allocation $x$ and a price vector $\p = (p_1,\dots, p_m)$, the earning of agent $i$ is $\p(x_i) = \sum_{j \in M} p_j \cdot x_{ij}$. 

Define the \textit{minimum-pain-per-buck} (MPB) ratio of agent $i$ as $\alpha_i = \min_{j \in M} d_{ij}/p_j$. Let $\mpb_i = \{j \in M \mid d_{ij} / p_j = \alpha_i\}$ denote the set of chores which are MPB for agent $i$ for payments $\p$. We call $(x,\p)$ an MPB allocation if $x_i \subseteq \mpb_i$ for all $i\in N$.

An allocation $(x,\p)$ is a competitive equilibrium (CE) if each agent $i$ receives a bundle of lowest disutility subject to earning their requirement $e_i$, and every chore is allocated. For additive disutilities, a CE can be equivalently be characterized in terms of MPB ratios: $(x,\p)$ is a CE iff (i) for all $i\in N$, $p(x_i) = e_i$, (ii) $(x,\p)$ is an MPB allocation, and (iii) for all $j\in M$, $\sum_{i\in N} x_{ij} = 1$.

Competitive equilibria are known to have desirable efficiency and fairness properties.

\begin{restatable}[First Welfare Theorem \cite{mas1995microeconomic}]{proposition}{thmfpo}\label{thm:fpo}
Let $(x,\p)$ be an MPB allocation. Then $x$ is \fpo.
\end{restatable}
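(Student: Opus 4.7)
The plan is to argue by contradiction in the standard microeconomic style. Suppose for contradiction that some fractional allocation $y$ dominates $x$, meaning $d_i(y_i) \le d_i(x_i)$ for all $i \in N$ with $d_h(y_h) < d_h(x_h)$ for some $h \in N$. The key is to translate the disutility comparisons into price (earning) comparisons using the MPB structure, and then derive a contradiction from the fact that $x$ and $y$ must assign the same total mass of chores.

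First I would exploit the MPB hypothesis: since $x_i \subseteq \mpb_i$, every chore $j$ in $x_i$ (even fractionally) satisfies $d_{ij} = \alpha_i p_j$, so
\begin{equation*}
d_i(x_i) \;=\; \sum_{j \in M} d_{ij} x_{ij} \;=\; \alpha_i \sum_{j \in M} p_j x_{ij} \;=\; \alpha_i \, p(x_i).
\end{equation*}
On the other hand, by the very definition of $\alpha_i$ as a minimum, every chore $j \in M$ satisfies $d_{ij} \ge \alpha_i p_j$, so for $y$ we only get an inequality:
\begin{equation*}
d_i(y_i) \;=\; \sum_{j \in M} d_{ij} y_{ij} \;\ge\; \alpha_i \sum_{j \in M} p_j y_{ij} \;=\; \alpha_i\, p(y_i).
\end{equation*}

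Next I would chain these with the domination hypothesis. For every $i$, we get $\alpha_i p(x_i) = d_i(x_i) \ge d_i(y_i) \ge \alpha_i p(y_i)$, and since disutilities are strictly positive and prices are strictly positive we have $\alpha_i > 0$, so $p(x_i) \ge p(y_i)$. For the distinguished agent $h$, the middle inequality is strict, hence $p(x_h) > p(y_h)$. Summing across all agents gives $\sum_{i \in N} p(x_i) > \sum_{i \in N} p(y_i)$.

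The final step is to reach a contradiction from the fact that both $x$ and $y$ are complete allocations of $M$: each chore $j$ is fully allocated, so $\sum_{i \in N} x_{ij} = \sum_{i \in N} y_{ij} = 1$, which gives $\sum_{i \in N} p(x_i) = \sum_{j \in M} p_j = \sum_{i \in N} p(y_i)$, contradicting the strict inequality above. Hence no such dominating $y$ exists, so $x$ is \fpo. There is no real obstacle here beyond being careful that $\alpha_i > 0$ (which follows from $d_{ij} > 0$ and $p_j > 0$) so that inequalities survive division; the argument is essentially the classical First Welfare Theorem specialized to the additive chore market.
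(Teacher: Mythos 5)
Your proof is correct and is exactly the classical First Welfare Theorem argument that the paper invokes by citation (it offers no proof of its own, deferring to \cite{mas1995microeconomic}): MPB gives $d_i(x_i)=\alpha_i\, p(x_i)$ with equality and $d_i(y_i)\ge \alpha_i\, p(y_i)$ in general, and summing the resulting earning inequalities contradicts both allocations having total earning $\sum_j p_j$. Your care about $\alpha_i>0$ and about both allocations being complete is exactly the right level of rigor; nothing is missing.
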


Let $\x$ be an integral allocation and $\p$ be a price vector. We let $\p_{-k}(\x_i) := \min_{S \subseteq \x_i, |S| \leq k} \p(\x_i \setminus S)$ denote the earning of agent $i$ from the bundle $\x_i$ excluding her $k$ highest paying chores. Likewise, we let $\hat{p}(\x_i) := \max_{j\in \x_i} \p(\x_i\setminus \{j\})$ denote the earning of $i$ from $\x_i$ excluding her least priced chore. The following notions are generalizations of price-EF1 (pEF1), introduced by \cite{Barman18FFEA}. 

\begin{definition}[Price \s{EF}$k$ and Price \efx]\label{def:pef1}
\normalfont An allocation $(\x, \p)$ is said to be $\alpha$-price envy-free up to $k$ chores ($\alpha$-\pefk) if for all $i, h \in N$ we have $\p_{-k}(\x_i) \leq \alpha\cdot\p(\x_h)$. Agent $i$ $\alpha$-\pefk-envies $h$ if $\p_{-k}(\x_i) > \alpha\cdot\p(\x_h)$. 

An allocation $(\x, \p)$ is said to be $\alpha$-price envy-free up any chore ($\alpha$-\pefx) if for all $i, h \in N$ we have $\p_{-X}(\x_i) \leq \alpha\cdot\p(\x_h)$. Agent $i$ $\alpha$-\pefx-envies $h$ if $\p_{-X}(\x_i) > \alpha\cdot\p(\x_h)$. 
\end{definition}

The next lemma is a sufficient condition for computing an $\alpha$-\s{EF}$k$/$\alpha$-\efx and \po allocation, and is used for computing fair and efficient allocations for both goods and chores \cite{barman2019prop1po,Barman18FFEA,ebadian2021bivaluedchores,Garg_Murhekar_Qin_2022,GMQ23chores,GMQ24wef1po,GMQ25}.
\begin{lemma}\label{lem:pEF1impliesEF1}
Let $(\x,\p)$ be an MPB allocation where $\x$ is integral. 
\begin{itemize}
\item[(i)] If $(\x,\p)$ is $\alpha$-\pefk, then $\x$ is $\alpha$-\textup{EF}$k$ and \fpo.
\item[(ii)] If $(\x,\p)$ is $\alpha$-\pefx, then $\x$ is $\alpha$-\efx and \fpo.
\end{itemize}
\end{lemma}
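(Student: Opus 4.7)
The plan is as follows. In both parts the \fpo conclusion is immediate from the First Welfare Theorem (Proposition~\ref{thm:fpo}), since $(X,\p)$ is an MPB allocation by hypothesis, so the substance is deriving the disutility-based guarantee ($\alpha$-$\textup{EF}k$ or $\alpha$-\efx) from the price-based one.

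I would begin by recording two elementary consequences of the MPB condition $X_i \subseteq \mpb_i$, which together form the bridge between prices and disutilities. First, every chore $j \in X_i$ satisfies $d_{ij} = \alpha_i p_j$ exactly, so for any subset $S \subseteq X_i$ one has the identity
\[
d_i(X_i \setminus S) \;=\; \alpha_i \cdot \p(X_i \setminus S).
\]
Second, since $\alpha_i = \min_{j \in M} d_{ij}/p_j$ is the minimum-pain-per-buck ratio, every chore satisfies $d_{ij} \ge \alpha_i p_j$, and summing over $j \in X_h$ yields the one-sided bound $d_i(X_h) \ge \alpha_i \cdot \p(X_h)$ for any $h \in N$. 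Crucially, the first is an equality on the bundle $i$ owns, while the second is only an inequality (in the favorable direction) on a bundle $i$ does not own.

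With these two facts, both parts collapse to a one-line chain of inequalities. For (i), fix agents $i,h$ and pick $S \subseteq X_i$ with $|S| \le k$ achieving the minimum $\p(X_i \setminus S) = \p_{-k}(X_i)$; then chaining the identity, the $\alpha$-\pefk hypothesis, and the one-sided bound gives
\[
d_i(X_i \setminus S) \;=\; \alpha_i \cdot \p_{-k}(X_i) \;\le\; \alpha \cdot \alpha_i \cdot \p(X_h) \;\le\; \alpha \cdot d_i(X_h),
\]
which is exactly the $\alpha$-$\textup{EF}k$ condition. For (ii), the identical chain applies with $S = \{j\}$ for an arbitrary $j \in X_i$: since $\p_{-X}(X_i) = \hat{\p}(X_i) = \max_{j' \in X_i} \p(X_i \setminus \{j'\})$ majorizes $\p(X_i \setminus \{j\})$, the $\alpha$-\pefx hypothesis delivers $\p(X_i \setminus \{j\}) \le \alpha \cdot \p(X_h)$, and multiplying through by $\alpha_i$ as before gives $d_i(X_i \setminus \{j\}) \le \alpha \cdot d_i(X_h)$.

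I do not expect any real obstacle here; the lemma is precisely the translation layer between the price world (where \pefk/\pefx are stated) and the disutility world (where $\textup{EF}k$/\efx live), and the MPB condition was engineered to make this translation essentially tautological. The only point warranting a moment of care is keeping the direction of the $\p_{-X}$ notation straight: it corresponds to removing the \emph{cheapest} chore, leaving the \emph{most expensive} remainder, which is exactly the worst case that must be dominated uniformly over $j \in X_i$ to conclude \efx.
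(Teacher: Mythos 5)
Your proposal is correct and follows essentially the same argument as the paper: \fpo from the First Welfare Theorem, the MPB equality $d_i(X_i\setminus S)=\alpha_i\cdot\p(X_i\setminus S)$ on one's own bundle, the one-sided bound $d_i(X_h)\ge\alpha_i\cdot\p(X_h)$ on others' bundles, and the resulting three-step chain of inequalities for both parts. No gaps.
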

\begin{proof}
Since $(\x,\p)$ is an MPB allocation,  \cref{thm:fpo} shows $\x$ is \fpo. Let $\alpha_i$ be the MPB ratio of agent $i$ in $(\x,\p)$. Consider any pair of agents $i,h\in N$. 
\begin{itemize}
\item[(i)] If $(\x,\p)$ is $\alpha$-\pefk, then:
\begin{align*}
\min_{S\subseteq\x_i, |S|\le k} d_i(\x_i\setminus S) &= \alpha_i\cdot\p_{-k}(\x_i) \tag{since $(\x,\p)$ is on MPB} \\
&\le \alpha_i\cdot \alpha \cdot \p(\x_h) \tag{since $(\x,\p)$ is $\alpha$-pEF$k$ (\cref{def:pef1})} \\
&\le \alpha \cdot d_i(\x_h). \tag{since $(\x,\p)$ is on MPB}
\end{align*}
Thus, $\x$ is $\alpha$-\s{EF}$k$. 
\item[(ii)] If $(\x,\p)$ is $\alpha$-\pefx, then:
\begin{align*}
\max_{j\in\x_i} d_i(\x_i\setminus \{j\}) &= \alpha_i\cdot\p_{-X}(\x_i) \tag{since $(\x,\p)$ is on MPB} \\
&\le \alpha_i\cdot \alpha \cdot \p(\x_h) \tag{since $(\x,\p)$ is $\alpha$-\pefx (\cref{def:pef1})} \\
&\le \alpha \cdot d_i(\x_h). \tag{since $(\x,\p)$ is on MPB}
\end{align*}
Thus, $\x$ is $\alpha$-\efx. \qedhere
\end{itemize}
\end{proof}

\paragraph{Earning-Restricted Equilibrium.} 
An earning restricted (ER) competitive equilibrium \cite{GMQ25} imposes a restriction $c_j$ on the total payment chore $j$ can provide to the agents. Under these restrictions, a ER equilibrium $(X,\p)$ is a partial fractional allocation $X$ and chore prices $\p$ where every agent earns their desired earning requirement subject to  receiving a bundle of lowest disutility, and the payment from each chore is at most $c_j$. Formally, $(X,\p)$ is an ER equilibrium iff (i) for all $i\in N$, $\p(X_i) = e_i$, (ii) $(X,\p)$ is an MPB allocation, and (iii) for all $j\in M$, $\sum_{j\in M} p_j \cdot X_{ij} = \min\{p_j, c_j\}$.  

Naturally $\sum_i e_i > \sum_j c_j$, some agent will not receive their earning requirement and an ER equilibrium cannot exist. However, \cite{GMQ25} showed that $\sum_i e_i \le \sum_j c_j$ is a sufficient condition for the existence of ER equilibria.
\begin{theorem}[\cite{GMQ25}] 
An ER Equilibrium exists if and only if $\sum_{i\in N} e_i \le \sum_{j\in M} c_j$.
\end{theorem}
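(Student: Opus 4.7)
The plan is to prove both implications separately. The only-if direction follows immediately by summing the equilibrium conditions. Suppose $(X,\p)$ is an ER equilibrium. Condition (i) gives $\sum_{i\in N} e_i = \sum_{i\in N} \p(X_i) = \sum_{i,j} p_j X_{ij}$. Interchanging the order of summation and applying condition (iii) to each chore $j$, the inner sum $\sum_i p_j X_{ij}$ equals $\min\{p_j, c_j\}$, so the total is $\sum_{j\in M} \min\{p_j, c_j\} \le \sum_{j\in M} c_j$.

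For the if direction, assume $\sum_i e_i \le \sum_j c_j$. The plan is to establish existence via a Kakutani fixed-point argument on a suitable space of price vectors, in the style of classical Arrow-Debreu proofs for Fisher markets. First I would restrict attention to a compact convex domain of prices $\p \in [\eps, P]^m$, with the intention of taking $\eps \to 0$ at the end (or handling degeneracies by a limiting argument on a perturbed market where each $e_i$ is slightly decreased so that strict inequality holds). For each candidate price vector $\p$, I would define a demand correspondence that returns the set of partial fractional allocations $X$ in which (a) each agent $i$ receives chores only from $\mpb_i$, (b) each agent's earning is exactly $e_i$, and (c) each chore $j$ contributes total earning at most $\min\{p_j, c_j\}$ and total mass at most $1$. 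I would then compose this with a price-update correspondence that raises $p_j$ whenever chore $j$ is ``over-demanded" relative to $\min\{p_j, c_j\}/p_j$ and lowers $p_j$ when under-demanded, remaining within the box. A fixed point of the composed correspondence can be shown to satisfy all three ER conditions.

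The main obstacle is verifying that this correspondence satisfies the hypotheses of Kakutani's theorem, namely non-empty, convex, compact values together with upper hemi-continuity. The assumption $\sum_i e_i \le \sum_j c_j$ enters exactly as the global feasibility condition that guarantees the demand correspondence is non-empty: the total earning capacity across chores, $\sum_j \min\{p_j, c_j\}$, must be able to cover $\sum_i e_i$ at the candidate prices, which can always be arranged (by scaling prices down toward $\eps$ on chores where $p_j > c_j$) under this assumption. Additional care is needed at the boundary where some price approaches zero, or where multiple agents have coincident MPB ratios; a standard technique is to solve a perturbed instance where earning requirements are reduced by a small $\delta$, obtain a perturbed ER equilibrium via the fixed-point, and then take $\delta \to 0$ using compactness of the price box and closedness of the set of ER equilibria in the limit.
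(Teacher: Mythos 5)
You should first note a mismatch of scope: this theorem is imported from \cite{GMQ25} and the present paper states it without proof, so there is no in-paper argument to compare your attempt against; I can only judge the attempt on its own terms. Your only-if direction is correct and complete: summing condition (i) over agents and condition (iii) over chores gives $\sum_{i\in N} e_i = \sum_{j\in M}\min\{p_j,c_j\}\le\sum_{j\in M} c_j$ (you also silently and correctly repaired the typo in the paper's condition (iii), whose inner sum should run over $i\in N$, not $j\in M$).

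The if direction, however, is a plan rather than a proof, and it has a genuine gap at exactly the point where the theorem is hard. Your demand correspondence requires each agent to earn exactly $e_i$ from MPB chores subject to the per-chore caps $\min\{p_j,c_j\}$, and you claim non-emptiness ``can always be arranged by scaling prices down toward $\eps$.'' But Kakutani requires non-empty values at \emph{every} fixed price vector in the domain, and at a fixed $\p$ an agent's MPB set may simply lack the earning capacity to deliver $e_i$ (e.g., all of $i$'s MPB chores have tiny $\min\{p_j,c_j\}$); adjusting prices is a move of the price-update map, not a repair of the demand correspondence. The standard remedy is to relax the demand side (let agents earn less than $e_i$) and then argue that, at a fixed point, the global hypothesis $\sum_i e_i\le\sum_j c_j$ forces every agent to earn their full requirement --- but that argument is the actual content of the existence proof and is entirely absent from your sketch. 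You also leave unverified the upper hemicontinuity of the MPB correspondence near the boundary of the price box and the claim that the $\delta\to 0$ limit keeps prices bounded away from $0$ and $\infty$. For context, the existence proof in \cite{GMQ25} does not use Kakutani at all: it formulates ER equilibria as a linear complementarity problem and shows that Lemke's scheme cannot terminate on a secondary ray precisely when $\sum_i e_i\le\sum_j c_j$ (the present paper alludes to this LCP connection). A fixed-point route is not inherently doomed, but as written your argument does not close the step that makes the theorem nontrivial.
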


\section{A General Framework for Approximate-EFX}\label{sec:framework}

We present a general framework for constructing approximate-EFX allocations of chores. The framework has two key components: (1) obtaining a suitable initial allocation $X$ that satisfies certain properties, and (2) beginning with $X$, iteratively performing \textit{chore swaps} until an approximate-EFX allocation is reached.

\paragraph{Initial Allocation.} We describe the properties that the initial allocation $X$ should satisfy. 

\renewcommand{\d}{\hat{d}}

\begin{definition}[$\lambda$-EFX-friendly allocation]\label{def:efx-friendly} \normalfont
Let $X$ be an allocation with $|X_i| \ge 1$ for all $i\in N$. Let $j_i \in \arg\max_{j\in X_i} d_{ij}$, and let $S_i = X_i \setminus \{j_i\}$. Then $X$ is considered $\lambda$-\textit{EFX-friendly} if there exists a partition $N = N_0 \sqcup N_H$ of the agents that satisfy the following properties:
\begin{itemize}
\item[(i)] For all $i\in N_0$, $d_i(X_i) \le \lambda\cdot\min_{k\in N_0} d_i(X_k)$.
\item[(ii)] For all $i\in N_0$, $d_i(X_i) \le \lambda\cdot\min_{h\in N_H} d_i(j_h)$.
\item[(iii)] For all $i\in N_H$, $d_i(S_i) \le (\lambda-1)\cdot\min_{k\in N_0} d_i(X_k)$.
\item[(iv)] For all $i\in N_H$, $d_i(S_i) \le (\lambda-1)\cdot\min_{h\in N_H} d_i(j_h)$.
\end{itemize}
\end{definition}

We observe using properties (i) and (ii) that agents in $N_0$ are $\lambda$-EFX in a $\lambda$-EFX-friendly allocation. Agents in $N_H$ may not be $\lambda$-EFX due to the presence of ``high" disutility chores from the set  $H = \{j_i : i\in N_H\}$. However, for each such agent $i\in N_H$, properties (iii) and (iv) imply that the bundle $S_i$ has lower disutility (upto a $(\lambda-1)$ factor) than the single chore $j_h$ of any $h\in N_H$ and the bundle $X_k$ of any $k\in N_0$. These properties enable us to address the approximate-EFX-envy of such agents by performing chore swaps.

\paragraph{Chore Swaps.} We formally define a \textit{chore swap}, which aims to reduce approximate-EFX-envy among agents by swapping chores between the bundles of envious and envied agents. The idea of chore swaps was introduced in \cite{GMQ25}.

\begin{definition}[Chore swap]\label{def:chore-swap} \normalfont
Consider an $\lambda$-EFX-friendly allocation $X$ and an agent $i\in N_H$ who is not $\lambda$-EFX. 
Let $j_i \in \arg\max_{j\in X_i} d_{ij}$. 
Let $\ell$ be the agent who $i$ envies the most, i.e. $\ell = \arg\min\{h\in N: d_i(X_h)\}$. Then an $(i,\ell)$ swap on $X$ results in an allocation $X'$ given by $X'_i = X_i \cup X_\ell \setminus \{j_i\}$,  $X'_\ell = \{j_i\}$, and $X'_k = X_k$ for all $k\notin\{i,\ell\}$.
\end{definition}

Thus, in an $(i,\ell)$ swap, the highest disutility chore $j_i$ according to agent $i$ in $X_i$ is transferred to $\ell$ and $\ell$'s entire bundle $X_\ell$ is given to agent $i$. The key benefit of a chore swap is that it locally resolves approximate-EFX-envy of agents $i$ and $\ell$. This is clearly true for agent $\ell$ who is assigned a single chore $j_i$ in the allocation $X'$ resulting from the swap. Since $i$ is not $\lambda$-EFX in $X$, we have $d_i(X_\ell) < \lambda^{-1}\cdot d_i(X_i)$, indicating that $X_\ell$ is of sufficiently low disutility. In addition, we also know that $X_\ell$ is $i$'s least disutility bundle, and that properties (iii) and (iv) hold since $X$ is $\lambda$-EFX-friendly. We use these observations to argue that the resulting allocation $X'$ is $\lambda$-EFX for agent $i$, and moreover remains $\lambda$-EFX-friendly. In particular, we have $d_i(X'_i) \le \lambda\cdot d_i(j_i)$. We formally prove this in \cref{lem:invariants}.

A natural algorithm would then involve starting with a $\lambda$-EFX-friendly allocation $Y$, and repeatedly performing chore swaps until a $\lambda$-EFX allocation is obtained. However, it could happen that after an $(i,\ell)$ swap, agent $i$ re-develops $\lambda$-EFX-envy after a subsequent swap. Concretely, consider a subsequent swap $(h,k)$ that causes $i$ to develop $\lambda$-EFX-envy. Note that $i$ will not $\lambda$-EFX-envy $h$'s bundle after the swap, since $i$ did not $\lambda$-EFX-envy $k$'s bundle before the swap. However, $i$ could $\lambda$-EFX-envy $k$, who has the single chore $j_h$ after the swap. The above scenario indicates that performing swaps in an arbitrary order can cause $\lambda$-EFX-envy to re-develop among agents who have participated in a swap in the past. However, we show that we can prevent this from happening by re-allocating the chores $\{j_i\}_{i\in N_H}$, and performing swaps in a particular order. 

\paragraph{Re-allocating Chores and Ordering the Swaps.} We first note that $d_i(X'_i) \le \lambda\cdot d_i(j_i)$ holds in the allocation $X'$ that results from an $(i,\ell)$ swap. The main observation is that if for every subsequent $(h,k)$ swap, we have that $d_i(j_i) \le d_i(j_h)$, then $d_i(X'_i) \le \lambda\cdot d_i(j_h)$, and $i$ will not re-develop $\lambda$-EFX-envy subsequently. Motivated by this observation, we re-allocate the chores in $H = \{j_i : i\in N_H\}$ to agents in $N_H$ by using a simple round-robin procedure. We set aside the chores in $H$, arbitrarily order agents in $N_H$, and ask them to pick their least disutility chore from $H$ one by one. After such a re-allocation of $H$, we perform $(i,\ell)$ chore swaps involving $\lambda$-EFX-envious agents $i\in N_H$ picked according to the same round-robin order.

\begin{algorithm}[!t]
\caption{Computes a $\lambda$-EFX allocation from a $\lambda$-EFX-friendly allocation}\label{alg:main}
\textbf{Input:} A $\lambda$-EFX-friendly allocation $Y$\\
\textbf{Output:} A $\lambda$-EFX allocation $X$   
\begin{algorithmic}[1]

\State Let $N = N_0 \sqcup N_H$ be a partition of the agents s.t. the properties of \cref{def:efx-friendly} hold for $Y$ 
\State Let $Y_i = S_i \cup \{j'_i\}$, where $j'_i \in \arg\max_{j\in Y_i} d_{ij}$
\State Let $H \gets \{j'_i : i\in N_H\}$
\Statex \textit{--- Phase 1: Re-allocating $H$ ---}
\State Let $X_i \gets Y_i$ for all $i\in N_0$, and $X_i \gets S_i$ for all $i\in N_H$
\State Let $H'\gets H$ 
\For{$i=1$ to $|N_H|$}
\State $j_i \gets \arg\min_{j\in H'} d_{ij}$
\State $H'\gets H'\setminus \{j_i\}$
\EndFor
\Statex \textit{--- Phase 2: Chore swaps ---}
\For{$i=1$ to $|N_H|$}
\If{$i$ is not $\lambda$-EFX}
\State $\ell \gets \arg\min\{d_i(X_h): h\in N\}$
\State Perform $(i, \ell)$ swap: $X_i \gets X_i \cup X_\ell \setminus \{j_i\}$, $X_\ell \gets \{j_i\}$
\EndIf
\EndFor
\State \Return $X$
\end{algorithmic}
\end{algorithm}

\subsection{Algorithm Description and Analysis}
We put all the above ideas together and design \cref{alg:main}, which takes as input a $\lambda$-EFX-friendly allocation $Y$ and returns a $\lambda$-EFX allocation in polynomial time.

\paragraph{Algorithm Description.} Let $N = N_0 \sqcup N_H$ be a partition of the agents s.t. the properties in \cref{def:efx-friendly} hold for $Y$. Let $j'_i \in \arg\max_{j\in Y_i} d_{ij}$, and let $H = \{j'_i : i\in N_H\}$. We re-index the agents $N = [n]$ so that $N_H = [r]$ for $r = |N_H|$, and $N_0 = [n]\setminus [r]$. Phase 1 of \cref{alg:main} (Lines 4-8) re-allocates $H$ to $N_H$ using a round-robin procedure. In iteration $i\in[r]$, agent $i$ picks out the least disutility chore $j_i$ among the remaining chores in $H$. Phase 2 of \cref{alg:main} (Lines 9-12) performs chore swaps to eliminate $\lambda$-EFX-envy. In iteration $i\in [r]$, if agent $i$ is not $\lambda$-EFX, we perform an $(i,\ell)$ swap involving the agent $\ell$ who is most envied by $i$. 

\paragraph{Algorithm Analysis.}
Clearly, \cref{alg:main} terminates in at most $2n$ iterations. We prove that it terminates with a $\lambda$-EFX allocation. Let $X^i$ be the allocation just \textit{after} iteration $i$ of Phase 2 of \cref{alg:main}. Let $r = |N_H|$.
\begin{restatable}{lemma}{invariants}\label{lem:invariants}
The following hold for each iteration $i\in [r]$ of Phase 2 of \cref{alg:main}.
\begin{enumerate}[label=(\roman*)]
\item Before iteration $i$, agents in $N_H\setminus [i-1]$ do not participate in a swap. 
\item In iteration $i$, if agent $i$ participates in an  $(i,\ell)$ swap, then $i$ is $\lambda$-\efx after the swap. Moreover, $d_i(X_i^i) \le \lambda\cdot d_i(j_i)$ immediately after the swap.
\item After iteration $i$, agents in $N_0 \cup [i]$ are $\lambda$-\efx.
\end{enumerate}
\end{restatable}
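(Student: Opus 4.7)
The plan is to prove all three invariants simultaneously by induction on $i\in\{0,1,\ldots,r\}$, where $i=0$ refers to the allocation right after Phase~1. Along the way I would carry a side-invariant (Lemma~A): at every point during Phase~2, $d_h(X_k)\ge d_h(S_h)/(\lambda-1)$ for every $h\in N_H$ and every $k\ne h$. Lemma~A holds initially by properties~(iii) and~(iv) of \cref{def:efx-friendly} (using $H=\{j'_{h'}:h'\in N_H\}$ and $X_k\supseteq\{j_k\}$ with $j_k\in H$ for $k\in N_H$), and is preserved by each swap $(h',\ell')$: the new $X_{h'}=S_{h'}\cup X_{\ell'}^{\mathrm{old}}$ dominates $X_{\ell'}^{\mathrm{old}}$ (which inductively satisfies Lemma~A), while the new $X_{\ell'}=\{j_{h'}\}$ has $d_h(j_{h'})\ge d_h(S_h)/(\lambda-1)$ again by property~(iv). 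The base case of the main induction has invariant~(i) vacuous, and invariant~(iii) reduces to showing that $N_0$ is $\lambda$-\efx initially, which follows from properties~(i) and~(ii) together with the fact that every post-Phase-1 bundle for $k\in N_H$ contains a chore from $H$.

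For invariant~(i), suppose $i-1$ swaps in iteration $i-1$. Property~(iv) yields $d_{i-1}(X_{i-1}^{\mathrm{old}})\le \lambda\cdot d_{i-1}(j_{i-1})$, and since $i-1$ is not $\lambda$-\efx, the most-envied agent $\ell_{i-1}$ satisfies $d_{i-1}(X_{\ell_{i-1}}^{\mathrm{old}}) < d_{i-1}(j_{i-1})$. On the other hand, by the inductive invariant~(i), every $k\in N_H\setminus[i-1]$ still has $X_k=S_k\cup\{j_k\}$, hence $d_{i-1}(X_k)\ge d_{i-1}(j_k)\ge d_{i-1}(j_{i-1})$, where the last inequality follows from the round-robin rule (agent $i-1$ chose $j_{i-1}$ as the $d_{i-1}$-minimizer over $\{j_{i-1},j_i,\ldots,j_r\}\subseteq H$). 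This forces $\ell_{i-1}\in N_0\cup[i-2]$, which maintains invariant~(i) at iteration $i$.

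For invariant~(ii), suppose $i$ swaps with $\ell$, producing $X_i^i=S_i\cup X_\ell^{\mathrm{old}}$. The ``moreover'' bound follows from property~(iv) together with $d_i(X_\ell^{\mathrm{old}})\le d_i(j_i)$ (derived exactly as in the previous paragraph): $d_i(X_i^i)\le (\lambda-1)d_i(j_i)+d_i(j_i)=\lambda\cdot d_i(j_i)$, which also certifies $\lambda$-\efx against $\ell$ since $X_\ell^i=\{j_i\}$. For $\lambda$-\efx against any other agent $k$, I would apply Lemma~A at $\ell$ to obtain $d_i(S_i)\le (\lambda-1)d_i(X_\ell^{\mathrm{old}})$, giving $d_i(X_i^i)\le \lambda\cdot d_i(X_\ell^{\mathrm{old}})\le \lambda\cdot d_i(X_k^{\mathrm{old}})=\lambda\cdot d_i(X_k^i)$, using that $\ell$ is the most-envied.

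For invariant~(iii), by invariant~(i) the new $\ell$ lies in $N_0\cup[i-1]$ and is trivially $\lambda$-\efx after the swap since $X_\ell^i=\{j_i\}$. For any other $k\in N_0\cup[i-1]$, $\lambda$-\efx against the new $X_i^i$ follows from the inductive $\lambda$-\efx-ness against $X_\ell^{\mathrm{old}}$ (since $X_i^i\supseteq X_\ell^{\mathrm{old}}$), while $\lambda$-\efx against the new singleton $\{j_i\}$ uses property~(ii) when $k\in N_0$ and a short case analysis on the current shape of $X_k$ when $k\in N_H\cap[i-1]$. I expect the main obstacle to be this last case analysis: $X_k$ may be unchanged ($S_k\cup\{j_k\}$), enlarged by an earlier swap, or reduced to a singleton if $k$ was itself chosen as $\ell$ earlier. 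In each subcase the needed bound $\max_{j}d_k(X_k\setminus\{j\})\le \lambda\cdot d_k(j_i)$ reduces either to property~(iv) for $k$ combined with the round-robin fact $d_k(j_k)\le d_k(j_i)$, or to invariant~(ii) already proved for the earlier iteration at which $k$ swapped.
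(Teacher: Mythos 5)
Your proposal is correct and follows essentially the same route as the paper's proof: the same induction over Phase-2 iterations, the same use of the round-robin order to get $d_i(j_i)\le d_i(j_h)$ for later $h$, the same argument that the swap partner always lies in $N_0\cup[i-2]$, and the same case analysis on the shape of $X_k$ when comparing against the new singleton $\{j_i\}$. Your ``Lemma~A'' is just a named repackaging of the observation the paper makes inline (every other bundle contains either a chore of $H$ or an original $N_0$-bundle, so properties~(iii) and~(iv) apply), and your direct derivation of $d_i(X_i^i)\le(\lambda-1)d_i(j_i)+d_i(j_i)$ is a slightly cleaner equivalent of the paper's $(1+\tfrac{1}{\lambda})$ manipulation.
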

\begin{proof}
We prove the above invariants inductively. Let $X^0$ be the allocation just before Phase 2 begins. We prove the above hold for $i=1$:
\begin{enumerate}[label=(\roman*)]
\item Invariant (i) holds trivially, since no agent has participated in any swap before iteration $1$.
\item Since $X^0$ is $\lambda$-EFX-friendly, properties (iii) and (iv) of \cref{def:efx-friendly} imply that $d_1(S_1) \le (\lambda-1)\cdot d_i(X^0_h)$ for all $h\in N$. Moreover, since agent $1$ is the first to pick a chore in the round-robin order, $d_1(j_1) \le d_1(j_h)$ for any $h\in [r]$. Thus,
\[
d_1(X^0_1) = d_1(S_1) + d_1(j_1) \le (\lambda-1)\cdot d_1(X^0_h) + d_1(j_h) \le \lambda\cdot d_1(X_h^0),
\]
proving that agent $1$ is $\lambda$-EFX in the allocation $X^0$. Hence, no swap takes place in iteration $1$ and invariant (ii) vacuously holds.
\item Since no swap takes place in iteration $1$, $X^1 = X^0$. Since $X^0$ is $\lambda$-EFX-friendly, properties (i) and (ii) of \cref{def:efx-friendly} imply that agents in $N_0$ are $\lambda$-EFX in $X^0$. We argued above that agent $1$ is $\lambda$-EFX in $X^0$. Thus agents in $N_0 \cup [1]$ are $\lambda$-EFX in $X^1 = X^0$, showing that invariant (iii) holds.
\end{enumerate}

Let us now assume that invariants (i)-(iii) hold for some $(i-1)\in [r-1]$, where $2\le i\le r$. We will prove that they continue to hold for $i$. 
\begin{enumerate}[label=(\roman*)]
\item Suppose an $(i-1,\ell)$ swap takes place in iteration $(i-1)$. We prove that $\ell\in N_0\cup [i-2]$.  By invariant (i), agents in $N_H\setminus [i-2]$ have not participated in a swap before iteration $i$. Hence $X^{i-2}_h = X^0_h$ for all $h\in N_H \setminus [i-2]$. 

Fix any $h\in N_H \setminus [i-2]$. We argue that agent $(i-1)$ is $\lambda$-EFX towards $h$ in $X^{i-2}$. Since $X^0$ is $\lambda$-EFX-friendly, property (iv) of \cref{def:efx-friendly} implies that $d_{i-1}(S_{i-1}) \le (\lambda-1)\cdot d_{i-1}(j_h)$. Moreover, since agent $(i-1)$ picks before agent $h$ in the round robin order, we have $d_{i-1}(j_{i-1}) \le d_{i-1}(j_h)$. Thus,
\begin{align*}
d_{i-1}(X^{i-2}_{i-1}) &= d_{i-1}(X^0_{i-1}) \\
&= d_{i-1}(S_{i-1}) + d_{i-1}(j_{i-1}) \\
&\le (\lambda-1)\cdot  d_{i-1}(j_h) + d_{i-1}(j_h) \\
&\le \lambda\cdot d_{i-1}(X_h^0) \\
&= \lambda\cdot d_{i-1}(X_h^{i-2}),
\end{align*}
implying that agent $(i-1)$ is $\lambda$-EFX towards all $h\in N_H\setminus[i-2]$ in $X^{i-2}$. Thus, $\ell \in N_0 \cup [i-2]$. Therefore, invariant (i) will hold for $i$ since no agent in $N_H\setminus [i-1]$ has participated in a swap in the first $(i-1)$ iterations.

\item Suppose an $(i,\ell)$ swap takes place in iteration $i$. We prove that agent $i$ is $\lambda$-EFX in the allocation $X^i$ after the swap.

By invariant (i), no agent in $N_H \setminus [i-1]$ has undergone any swap in the first $(i-1)$ iterations. In particular, this implies $X^{i-1}_{i} = X^0_{i} = S_{i} \cup \{j_{i}\}$. Using the fact that agent $i$ is not $\lambda$-EFX in the allocation $X^{i-1}$, we have $d_{i}(X^{i-1}_{i}) > \lambda\cdot d_{i}(X_\ell^{i-1})$. After iteration $i$, we have $X_{i}^{i} = S_{i} \cup X_\ell^{i-1}$, and $X_\ell^{i} = \{j_{i}\}$. 

Observe that:
\begin{itemize}[leftmargin=*]
\item Agent $i$ does not $\lambda$-EFX-envy bundle $X^{i}_\ell = \{j_{i}\}$. This is because: 
\begin{align*}
d_{i}(X_{i}^{i}) &= d_{i}(S_{i}) + d_{i}(X_\ell^{i-1}) \\
&< d_{i}(S_{i}) + \frac{1}{\lambda}\cdot d_{i}(X^{i-1}_{i}) \tag{using $d_{i}(X^{i-1}_{i}) > \lambda\cdot d_{i}(X_\ell^{i-1})$} \\
&= (1+\frac{1}{\lambda})\cdot d_{i}(S_{i}) + \frac{1}{\lambda}\cdot d_{i}(j_{i}) \tag{using $X^{i-1}_{i} = S_{i} \cup \{j_{i}\}$} \\
&\le (1+\frac{1}{\lambda})\cdot  (\lambda-1) \cdot d_{i}(j_{i}) + \frac{1}{\lambda}\cdot d_{i}(j_{i}) \tag{using Def.~\ref{def:efx-friendly}, property (iv)} \\
&= \lambda\cdot d_{i}(j_{i}).
\end{align*}
This also shows that after the swap, $d_{i}(X_{i}^{i}) \le \lambda\cdot d_{i}(j_{i})$, as claimed.
\item Agent $i$ does not $\lambda$-EFX-envy bundle $X^{i}_h = X^{i-1}_h$, for $h\notin\{i, \ell\}$.  By the nature of chore swaps, $X_h^{i-1}$ either contains a chore $j\in H$, or some bundle $X^0_k$ for some $k\in N_0$. Using properties (iii) and (iv) of \cref{def:efx-friendly}, this implies $d_i(S_i) \le (\lambda-1)\cdot d_i(X_h^{i-1})$. We then have:
\begin{align*}
d_{i}(X_{i}^{i}) &= d_{i}(S_{i}) + d_{i}(X_\ell^{i-1}) \\
&\le d_{i}(S_{i}) + d_i(X_h^{i-1}) \tag{by choice of $\ell$} \\
&\le (\lambda-1) \cdot d_{i}(X_h^{i-1}) + d_i(X_h^{i-1}) \tag{using Def.~\ref{def:efx-friendly}, properties (iii) and (iv)} \\
&= \lambda\cdot d_i(X_h^{i-1}) = \lambda\cdot d_i(X_h^{i}).
\end{align*}
\end{itemize}
The above two observations prove invariant (ii) holds for $i$.

\item We prove that any agent $h\in N_0 \cup [i]$ is $\lambda$-EFX in the allocation $X^i$ after the $(i,\ell)$ swap in iteration $i$. First note that as shown in invariant (ii), agent $i$ is $\lambda$-EFX in $X^i$. We therefore consider an agent $h\in N_0 \cup [i-1]$. By invariant (iii), $h \in N_0\cup [i-1]$ was $\lambda$-EFX in the allocation $X^{i-1}$ before iteration $i$. Since $X^i_k \supseteq X^{i-1}_k$ for all $k\neq \ell$, we conclude that agent $h$ does not $\lambda$-EFX-envy the bundle $X^i_k$ of any agent $k\neq \ell$ after iteration $i$ as well.

It remains to show that $h$ does not $\lambda$-EFX-envy the bundle $X^i_\ell = \{j_i\}$. We consider five cases.
\begin{itemize}[leftmargin=*]
\item If $h\in N_0$ and $h$ has not participated in a swap, then $X^i_h = X^0_h$. Property (ii) of \cref{def:efx-friendly} implies that $d_h(X^0_h) \le \lambda\cdot d_h(j_i)$. Thus, 
\[
d_h(X^i_h) = d_h(X^0_h) \le \lambda\cdot d_h(j_i) = \lambda\cdot d_h(X^i_\ell),
\]
proving that $h$ does not $\lambda$-EFX-envy $X^i_\ell$.
\item If $h\in N_0$ and $h$ has participated in swaps, then they must be swaps of the form $(k,h)$ for $k\in N_H$. After every such swap, $h$ has a single chore in $X^i$ and is EFX.
\item If $h\in N_H$ and $h$ participates in any $(k,h)$ swap in the first $i$ iterations, then $h$ has a single chore in $X^i$  and is EFX. 
\item Suppose $h\in N_H$ and $h$ does not participate in any $(k,h)$ swap. Suppose $h$ participates in a $(h,k)$ swap in iteration $h$. Since $h$ does not participate in any further swaps, we have $X_h^i = X^h_h$. By invariant (ii) we have $d_h(X^h_h) \le \lambda\cdot d_h(j_h)$. Since $h$ picks before $i$ in the round-robin order, we have $d_h(j_h) \le d_h(j_i)$. Putting the above together, we obtain
\[
d_h(X^i_h) = d_h(X^h_h) \le \lambda\cdot d_h(j_h) \le \lambda\cdot d_h(j_i) = \lambda\cdot d_h(X^i_\ell),
\]
proving that $h$ does not $\lambda$-EFX-envy $X_\ell^i$. 
\item Suppose $h\in N_H$ and $h$ does not participate in any swaps in the first $i$ iterations. Then $X^0_h = X^i_h = S_h \cup \{j_h\}$. Since $h$ picks before $i$ in the round-robin order, we have $d_h(j_h) \le d_h(j_i)$. By property (iv) of \cref{def:efx-friendly}, we have $d_h(S_h) \le (\lambda-1)\cdot d_h(j_i)$. Thus, 
\begin{align*}
d_h(X^i_h) &= d_h(X^0_h) \\
&= d_h(S_h) + d_h(j_h) \\
&\le (\lambda-1)\cdot d_h(j_i) + d_h(j_i) \\
&= \lambda\cdot d_h(j_i) \\
&= \lambda\cdot d_h(X^i_\ell),
\end{align*}
proving that $h$ does not $\lambda$-EFX-envy $X^i_\ell$.
\end{itemize}
In conclusion, $h$ is $\lambda$-EFX in the allocation $X^i$, and invariant (iii) holds. \qedhere
\end{enumerate}
\end{proof}

\noindent With \cref{lem:invariants} in hand, it is straightforward to  prove that \cref{alg:main} terminates with a $\lambda$-EFX allocation.
\begin{theorem}\label{thm:framework}
Given as input a $\lambda$-\efx-friendly allocation, \cref{alg:main} returns a $\lambda$-\efx allocation in polynomial time.
\end{theorem}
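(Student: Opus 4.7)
The plan is to derive the theorem essentially as a corollary of \cref{lem:invariants}, which already does the heavy lifting. I first need to handle termination and runtime, and then correctness via the invariants.

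For termination and runtime, I would observe that Phase 1 consists of exactly $|N_H| \le n$ iterations of a round-robin pick, each requiring a single pass over the at most $n$ chores in $H$ to find the minimum-disutility choice. Phase 2 likewise consists of exactly $|N_H| \le n$ iterations; in each iteration we test whether agent $i$ is $\lambda$-\efx (which requires computing $d_i(X_h)$ for each $h\in N$ and comparing against $d_i(X_i\setminus\{j\})$ for each $j\in X_i$, both polynomial-time operations), and if so we perform a single chore swap that modifies exactly two bundles. Hence \cref{alg:main} terminates in at most $2n$ iterations, each running in time polynomial in $n$ and $m$.

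For correctness, I would invoke invariant (iii) of \cref{lem:invariants} at the final iteration $i = r := |N_H|$. This invariant states that after iteration $r$ of Phase 2, every agent in $N_0 \cup [r]$ is $\lambda$-\efx in the current allocation. Since we re-indexed so that $N_H = [r]$ and $N_0 = [n]\setminus[r]$, we have $N_0 \cup [r] = N$, i.e., every agent is $\lambda$-\efx. Because Phase 2 performs no further modifications after iteration $r$, the output allocation $X$ returned in the last line is exactly this $\lambda$-\efx allocation.

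I do not anticipate a significant obstacle; the real work has been carried out in the invariant proofs of \cref{lem:invariants}, which handle the delicate ordering argument ensuring that no agent re-develops $\lambda$-\efx-envy after a previous swap. The only thing to double-check is the boundary case $r = 0$ (i.e., $N_H = \emptyset$): in this case Phase 2 is vacuous, and the $\lambda$-\efx-friendliness of $Y$ together with properties (i) and (ii) of \cref{def:efx-friendly} (applied to every agent, which now lies in $N_0$) immediately yields that $Y$ itself is $\lambda$-\efx, which the algorithm returns unchanged. This completes the proof.
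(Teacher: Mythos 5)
Your proposal is correct and follows essentially the same route as the paper: both derive the theorem directly from invariant (iii) of \cref{lem:invariants} applied at iteration $r=|N_H|$, noting that $N_0\cup[r]=N$, together with the observation that the algorithm runs for at most $2n$ iterations. Your additional remarks on per-iteration runtime and the $r=0$ boundary case are harmless elaborations of what the paper treats as immediate.
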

\begin{proof}
Clearly, \cref{alg:main} terminates in at most $2n$ iterations. By invariant (iii), agents in $N_0 \cup [r]$ are $\lambda$-EFX in the allocation $X^r$ at the end of iteration $r$. Since $N_H = [r]$, we conclude that $X^r$ is $\lambda$-EFX for all agents.
\end{proof}

\subsection{Obtaining Improved Guarantees}
We show that under some conditions the same framework can lead to improved EFX-approximations, if we begin with an allocation that satisfies weaker conditions than outlined in \cref{def:efx-friendly}. Let $L_i = \{j\in M : d_{ij} = \min_{k\in M} d_{ik}\}$ be the set of lowest disutility chores of an agent $i\in N$.

\begin{definition}[Weakly $\lambda$-EFX-friendly allocation]\label{def:weak-efx-friendly} \normalfont
Let $X$ be an allocation with $|X_i| \ge 1$ for all $i\in N$. Let $j_i \in \arg\max_{j\in X_i} d_{ij}$, and let $S_i = X_i \setminus \{j_i\}$. Then $X$ is considered \textit{weakly} $\lambda$-\textit{EFX-friendly} if for all $i\in N_H$, $S_i \cap L_i \neq\emptyset$, and there exists a partition $N = N_0 \sqcup N_H$ of the agents that satisfy the following properties:
\begin{itemize}
\item[(i)] For all $i\in N_0$, $\d_i(X_i) \le \lambda\cdot\min_{k\in N_0} d_i(X_k)$.
\item[(ii)] For all $i\in N_0$, $\d_i(X_i) \le \lambda\cdot\min_{h\in N_H} d_i(j_h)$.
\item[(iii)] For all $i\in N_H$, $\d_i(S_i) \le (\lambda-1)\cdot\min_{k\in N_0} d_i(X_k)$.
\item[(iv)] For all $i\in N_H$, $\d_i(S_i) \le (\lambda-1)\cdot\min_{h\in N_H} d_i(j_h)$.
\end{itemize}
\end{definition}

\noindent Similar to \cref{thm:framework}, we prove the following theorem.
\begin{theorem}\label{thm:strong}
Given as input a weakly $\lambda$-\efx-friendly allocation, \cref{alg:main} returns a $\lambda$-\efx allocation in polynomial time.
\end{theorem}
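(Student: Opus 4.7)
The plan is to mirror the proof of \cref{thm:framework} essentially verbatim, with every inequality of the form ``$d_i(X_i^*) \le \lambda \cdot d_i(X_h^*)$'' replaced by the weaker ``$\hat{d}_i(X_i^*) \le \lambda \cdot d_i(X_h^*)$''. The key observation is that $\lambda$-\efx of agent $i$ towards $h$ is equivalent to $\hat{d}_i(X_i^*) \le \lambda \cdot d_i(X_h^*)$, where $\hat{d}_i(X_i^*)$ is the disutility of $X_i^*$ after removing a minimum-disutility chore of $X_i^*$; removing any other chore can only make the left-hand side smaller. Thus any bound on $\hat{d}_i$ directly certifies $\lambda$-\efx.

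Concretely, I would prove an analog of \cref{lem:invariants} whose invariant (ii) reads: \emph{``if agent $i$ participates in an $(i,\ell)$ swap in iteration $i$, then $i$ is $\lambda$-\efx after the swap, and $\hat{d}_i(X_i^i) \le \lambda \cdot d_i(j_i)$ immediately after the swap.''} Because $S_i \cap L_i \neq \emptyset$ by \cref{def:weak-efx-friendly}, fixing any $c^* \in L_i \cap S_i \subseteq X_i^i$ (which is a global minimum for $i$) gives
\[
\hat{d}_i(X_i^i) = d_i(X_i^i \setminus \{c^*\}) = \hat{d}_i(S_i) + d_i(X_\ell^{i-1}).
\]
Repeating the chain of inequalities in invariant (ii) of \cref{lem:invariants}---using $\hat{d}_i(X_i^{i-1}) > \lambda \cdot d_i(X_\ell^{i-1})$ (the precise meaning of $i$ failing $\lambda$-\efx before the swap) and property (iv) of \cref{def:weak-efx-friendly} in place of its strong counterpart---then yields $\hat{d}_i(X_i^i) \le (1+1/\lambda)(\lambda-1)\cdot d_i(j_i) + (1/\lambda)\cdot d_i(j_i) = \lambda \cdot d_i(j_i)$.

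Invariants (i) and (iii) carry over under the same substitution. Every application of a property of \cref{def:efx-friendly} in the original proof is replaced by the corresponding property of \cref{def:weak-efx-friendly}, which bounds $\hat{d}_h(\cdot)$ rather than $d_h(\cdot)$; this is precisely the quantity needed to certify $\lambda$-\efx. Agents who end an iteration holding a singleton bundle are trivially $\lambda$-\efx, so those cases require no change.

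The step I expect to require the most care is verifying that some chore in $L_h$ remains in $X_h^*$ throughout all subsequent swaps involving $h$, so that the bounds on $\hat{d}_h$ continue to translate into $\lambda$-\efx. This reduces to the observation that (a) an agent on the receiving end of a swap only gains chores, so $S_h \subseteq X_h^*$ and hence $L_h \cap X_h^* \supseteq L_h \cap S_h \neq \emptyset$; and (b) the only chore removed in an $(h,\ell)$ swap is $j_h$, after which $S_h$ (and therefore at least one chore from $L_h$) still lies in the bundle. With this invariant in hand, the analog of \cref{thm:framework} follows, producing a $\lambda$-\efx allocation in polynomial time.
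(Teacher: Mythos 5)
Your proposal is correct and follows essentially the same route as the paper's own (much terser) proof: both reduce to rerunning \cref{lem:invariants} with $\hat{d}_i$ in place of $d_i$, using $S_i \cap L_i \neq \emptyset$ to justify the additivity $\hat{d}_i(S_i \cup X_\ell^{i-1}) = \hat{d}_i(S_i) + d_i(X_\ell^{i-1})$ and the fact that bundles of agents in $N_H$ always either are singletons or contain $S_i$. Your write-up actually supplies more of the verification than the paper does, and the details you give check out.
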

\begin{proof}
The proof of \cref{thm:strong} closely follows the proof of \cref{thm:framework} and \cref{lem:invariants}. Let $Y$ be a weakly $\lambda$-EFX-friendly allocation. Properties (i) and (ii) imply that agents in $N_0$ are $\lambda$-EFX in the allocation $Y$, leaving the algorithm to handle agents in $N_H$. For any allocation $X$ maintained during the course of the algorithm, for any agent $i$ we either have $|X_i| = 1$, or $i\in N_H$ and $S_i \subseteq X_i$. Since $i$ is EFX in the first case, we focus on the latter. Note that by definition, $S_i \cap L_i \neq \emptyset$ for any $i\in N_H$, i.e., the lowest disutility chore of $i$ in the bundle $X_i$ lies in $S_i$. This implies that $\d_i(X_i) = \d_i(S_i) + d_i(X_i\setminus S_i)$. Using this observation, by carefully revisiting the proof of \cref{lem:invariants} we can show that a weaker version of \cref{lem:invariants} holds. In particular, invariants (i), (ii), (iii) still hold, but with a weaker condition in invariant (ii) that states that $\d_i(X^i_i) \le \lambda\cdot d_i(j_i)$ holds immediately after an $(i,\ell)$ swap. This implies that \cref{alg:main} terminates with an $\lambda$-EFX allocation.
\end{proof}

As we show in Theorems~\ref{thm:bivalued} and \ref{thm:small}, improvements in the approximation factor are possible if it we begin with a weakly $\lambda$-EFX-friendly allocation $Y$ and use \cref{thm:strong}, instead of beginning with a $\lambda'$-EFX-friendly allocation and using \cref{thm:framework}, where $\lambda < \lambda'$.

\section{Applications}\label{sec:applications}

In this section, we give applications of the framework described in \cref{sec:framework}. 

\subsection{Existence of 2-EFX Allocations}\label{sec:2efx}
Our main result establishes the existence of 2-EFX allocations for \textit{all} chore division instances with additive disutilities.

\begin{theorem}\label{thm:2efx}
For any chore division instance with additive disutilities, a $2$-\efx allocation always exists.
\end{theorem}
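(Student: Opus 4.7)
The plan is to reduce \cref{thm:2efx} to \cref{thm:framework}: I would construct a $2$-EFX-friendly allocation in the sense of \cref{def:efx-friendly} and then invoke \cref{alg:main}. The starting point is the breakthrough of \citet{mahara2025efonepo}, which yields an integral allocation $X$ together with prices $\p$ such that $(X,\p)$ is an MPB allocation and is price-\efone (pEF1). Let $\alpha_i$ denote the MPB ratio of agent $i$; then MPB gives $d_{ij}=\alpha_i p_j$ for every $j\in X_i$ and $d_{ij}\ge \alpha_i p_j$ otherwise, while pEF1 gives $p(S_i)\le p(X_h)$ for all $i,h\in N$, where $j_i\in\arg\max_{j\in X_i}p_j$ (equivalently, $\arg\max_{j\in X_i}d_{ij}$ under MPB) and $S_i=X_i\setminus\{j_i\}$. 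Agents with empty bundles are trivially EFX and can be set aside, so assume $|X_i|\ge 1$ for every $i$.

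I would then define the partition of \cref{def:efx-friendly} via the threshold
\[
\beta \;:=\; \max_{i\in N} p(S_i), \qquad N_H \;:=\; \{i\in N : p_{j_i} > \beta\}, \qquad N_0 \;:=\; N\setminus N_H.
\]
Three elementary facts drive the verification: (a) $p(X_h)\ge\beta$ for all $h$, by pEF1; (b) for $i\in N_0$, both $p(S_i)\le\beta$ (by definition of $\beta$) and $p_{j_i}\le\beta$ (by definition of $N_0$), so $p(X_i)\le 2\beta$; (c) for $h\in N_H$, $p_{j_h}>\beta$. Converting these price bounds to disutility bounds via MPB yields all four properties with $\lambda=2$: for $i\in N_0$ and $k\in N_0$, $d_i(X_i)=\alpha_i p(X_i)\le 2\alpha_i\beta\le 2\alpha_i p(X_k)\le 2d_i(X_k)$, giving (i); for $i\in N_0$ and $h\in N_H$, $d_i(X_i)\le 2\alpha_i\beta<2\alpha_i p_{j_h}\le 2d_i(j_h)$, giving (ii); for $i\in N_H$ and $k\in N_0$, $d_i(S_i)=\alpha_i p(S_i)\le\alpha_i\beta\le\alpha_i p(X_k)\le d_i(X_k)$, giving (iii); and for $i\in N_H$ and $h\in N_H$, $d_i(S_i)\le\alpha_i\beta<\alpha_i p_{j_h}\le d_i(j_h)$, giving (iv). Hence $(X,\p)$ is $2$-EFX-friendly, and \cref{thm:framework} converts it to a $2$-EFX allocation in at most $2n$ iterations of \cref{alg:main}.

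I do not foresee a substantive obstacle. The only real design decision is the choice of threshold $\beta=\max_i p(S_i)$, engineered so that each chore $j_h$ with $h\in N_H$ is priced strictly above every low part $S_i$ (forcing properties (iii) and (iv)), while every $N_0$-bundle fits inside $2\beta$ (forcing properties (i) and (ii)). The only minor subtlety is the edge case of agents with empty bundles in Mahara's allocation, who are vacuously EFX throughout \cref{alg:main} since any chore swap either leaves them untouched or assigns them a single chore; they can therefore be excluded from the partition without affecting correctness.
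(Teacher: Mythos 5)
Your proposal is correct and follows essentially the same route as the paper: reduce to \cref{thm:framework} by showing that the pEF1 and MPB allocation of \citet{mahara2025efonepo} is $2$-EFX-friendly. The only difference is cosmetic---the paper thresholds on $\rho=\min_{i}p(X_i)$ where you use $\beta=\max_i p(S_i)$, and since pEF1 gives $\beta\le\rho$, both choices yield the four properties of \cref{def:efx-friendly} by the same MPB conversion.
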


In a recent breakthrough, \citet{mahara2025efonepo} showed the existence of EF1 and PO allocations for all chore division instances with additive disutilities. In fact, they prove the existence of an allocation $X$ and a price vector $\p$ such that $(X,\p)$ is an MPB allocation that is pEF1. We prove that $X$ is $\lambda$-EFX-friendly for $\lambda = 2$.

\begin{lemma}\label{lem:2ef}
If $(X,\p)$ is an MPB allocation that is pEF1, then $X$ is $2$-\efx-friendly.
\end{lemma}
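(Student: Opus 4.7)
The plan is to translate the pEF1 and MPB hypotheses into price-based inequalities, choose the partition $N = N_0 \sqcup N_H$ carefully, and then verify each of the four conditions in \cref{def:efx-friendly} by short price manipulations. I will write $\alpha_i$ for the MPB ratio of agent $i$ and use throughout that $d_i(X_i) = \alpha_i \p(X_i)$ while $d_i(X_h) \ge \alpha_i \p(X_h)$ for every $h$.

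The key preliminary observation is that on MPB we have $d_{ij} = \alpha_i p_j$ for every $j \in X_i$, so the highest-disutility chore $j_i \in \arg\max_{j \in X_i} d_{ij}$ is also the highest-\emph{price} chore in $X_i$. Therefore $\p(S_i) = \p(X_i) - p_{j_i} = \p_{-1}(X_i)$, and the pEF1 hypothesis reads $\p(S_i) \le \p(X_k)$ for every pair $i, k \in N$. I may assume $|X_i| \ge 1$ for every $i$, since otherwise pEF1 would force every bundle to have size at most one, making $X$ trivially EFX.

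I will then take $N_H := \{\, i \in N : p_{j_i} > \p(X_k) \text{ for some } k \in N\,\}$ and $N_0 := N \setminus N_H$. With this choice, three of the four properties are immediate. For (i), any $i \in N_0$ satisfies $\p(X_i) = \p(S_i) + p_{j_i} \le \p(X_k) + \p(X_k) = 2\p(X_k)$ for every $k$, by pEF1 and the definition of $N_0$; multiplying by $\alpha_i$ and using MPB gives $d_i(X_i) \le 2 d_i(X_k)$. For (iii), any $i \in N_H$ satisfies $d_i(S_i) = \alpha_i \p(S_i) \le \alpha_i \p(X_k) \le d_i(X_k)$ directly from pEF1 and MPB. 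For (iv), given $h \in N_H$ I fix a witness $k^*$ with $p_{j_h} > \p(X_{k^*})$ and apply pEF1 to $(i, k^*)$ to get $\p(S_i) \le \p(X_{k^*}) < p_{j_h}$, hence $d_i(S_i) \le d_i(j_h)$; the case $h = i$ is covered by using the $N_H$ witness for $i$ itself.

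The step I expect to be the main obstacle is property (ii), since it is the only one that forces a chain using the defining conditions of \emph{both} sides of the partition. Given $i \in N_0$ and $h \in N_H$, the plan is to fix the witness $k^*$ with $p_{j_h} > \p(X_{k^*})$, use $i \in N_0$ to obtain $p_{j_i} \le \p(X_{k^*})$, apply pEF1 to $(i, k^*)$ to get $\p(X_i) \le \p(X_{k^*}) + p_{j_i} \le 2 \p(X_{k^*})$, and finally invoke the $N_H$ witness once more to conclude $\p(X_i) < 2 p_{j_h}$, which translates via MPB into $d_i(X_i) \le 2 d_i(j_h)$. This three-step chain is precisely why the partition is defined through the ``expensive chore'' condition rather than the more obvious ``not 2-EF'' condition: only the former directly lower-bounds $p_{j_h}$, which is exactly what property (ii) requires.
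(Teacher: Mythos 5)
Your proposal is correct and follows essentially the same route as the paper: your partition $N_H = \{i : p_{j_i} > \p(X_k) \text{ for some } k\}$ is identical to the paper's $N_H = \{i : p_{j_i} > \rho\}$ with $\rho = \min_k \p(X_k)$, and your witness-based verification of the four properties matches the paper's chain of observations through $\rho$. The only difference is notational (explicit witnesses $k^*$ versus the single quantity $\rho$), so there is nothing substantive to compare.
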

\begin{proof}
Let us scale the disutilities to prices, i.e., scale disutilities so that $d_{ij} = p_j$ for any $j\in X_i$ and $d_{ij} \ge p_j$ for any $j\notin X_i$. Note that this is without loss of generality since the properties of EF1, EFX, fPO are scale-invariant.

Let $\rho = \min_{i\in N} \p(X_i)$ be the earning of the least earner. For each $i\in N$, let $j_i = \arg\max_{j\in X_i} p_j$ be the highest price chore in $X_i$, and let $S_i = X_i \setminus \{j_i\}$. Since $X$ is pEF1, we know $\p(S_i) = \p_{-1}(X_i)  \le \rho$, for all $i\in N$. We define $N_0 = \{i\in N : p_{j_i} \le \rho\}$, and $N_H = \{i\in N : p_{j_i} > \rho\}$. We make the following observations:
\begin{enumerate}
\item For any $i\in N_0$, we have $d_i(X_i) = \p(X_i) = \p(S_i) + p_{j_i} \le 2\rho$, since $X$ is pEF1 and $i\in N_0$.
\item For any $i\in N_H$, we have $d_i(S_i) = \p(S_i) \le \rho$, since $X$ is pEF1.
\item For any $i\in N$ and $k\in N_0$, we have $d_i(X_k) \ge \p(X_k) \ge \rho$, by using the MPB condition and the definition of $\rho$. 
\item For any $i\in N$ and $h\in N_H$, we have $d_i(j_h) \ge p_{j_h} > \rho$, by using the MPB condition and the fact that $h\in N_H$.
\end{enumerate}

It is now straightforward to show $X$ satisfies the conditions of \cref{def:efx-friendly} for $\lambda=2$.

\begin{enumerate}[label=(\roman*)]
\item Using observations (1) and (3), we have $d_i(X_i) \le 2\rho \le 2\cdot d_i(X_k)$ for any $i,k \in N_0$.
\item Using observations (1) and (4), we have $d_i(X_i) \le 2\rho \le 2\cdot d_i(j_h)$ for any $i\in N_0$ and $h\in N_H$.
\item Using observations (2) and (3), we have $d_i(S_i) \le \rho \le d_i(X_k)$ for any $i\in N_H$ and $k\in N_0$.
\item Using observations (2) and (4), we have $d_i(S_i) \le \rho \le d_i(j_h)$ for any $i, h\in N_H$. \qedhere
\end{enumerate}
\end{proof}

Since $X$ is $2$-EFX-friendly, \cref{thm:framework} implies that we can compute a $2$-EFX allocation in polynomial time from $X$. Thus, we establish the existence of $2$-EFX allocations for all instances with additive disutilities.

\subsection{Computing 4-EFX Allocations}\label{sec:4efx}

The first constant-factor approximation of EFX for allocating chores to agents with general, additive disutilities was shown by Garg, Murhekar, and Qin \cite{GMQ25}. At a high-level, their approach was to (1) define and show the existence of a fractional earning-restricted (ER) competitive equilibrium $(x^0,p)$, (2) rounding it to an integral MPB allocation $(X,p)$ that is $2$-pEF$2$, and (3) constructing an EFX re-allocation of up to $2n$ ``high" price chores, and (4) performing chore swaps on a subset of the agents. The 4-EFX guarantee is eventually obtained through intricate arguments that track the prices of bundles during the course of the algorithm. 
We re-prove their result using our framework, thereby simplifying both their algorithm and its analysis.

\begin{theorem}[\cite{GMQ25}]\label{thm:4efx}
For any chore division instance with additive disutilities, a $4$-\efx allocation always exists and can be computed in polynomial time given an earning restricted competitive equilibrium as input.
\end{theorem}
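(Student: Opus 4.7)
The plan is to reduce Theorem~\ref{thm:4efx} to an application of Theorem~\ref{thm:framework} by constructing a $4$-EFX-friendly allocation from the given ER equilibrium. Let $(x^0, \p)$ be the given ER equilibrium, with common earning requirement $e_i = \rho$ and caps $c_j \le \rho$ on every chore, as in~\cite{GMQ25}. First I would round $(x^0, \p)$ to an integral MPB allocation $(X, \p)$ that is $2$-pEF$2$, i.e., $\p_{-2}(X_i) \le 2\rho$ for every agent $i$, using the rounding procedure of~\cite{GMQ25}. Because each cap satisfies $c_j \le \rho$, no single chore contributes more than $\rho$ of earnings to a single agent in $x^0$, and I would then invoke the matching-based reassignment of high-priced chores from~\cite{GMQ25} (which touches at most $2n$ chores) to guarantee that every agent ends up with at most one ``high'' chore, i.e., a chore $j$ with $p_j > \rho$, while preserving the MPB property and the pEF$1$-style bound on the low part of every bundle.

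Next, let $j_i = \arg\max_{j \in X_i} d_{ij}$, $S_i = X_i \setminus \{j_i\}$, and partition $N = N_0 \sqcup N_H$ by $N_H = \{i : p_{j_i} > \rho\}$. Scaling disutilities so that $d_{ij} = p_j$ for all $j \in X_i$ (as in the proof of Lemma~\ref{lem:2ef}), I would verify the four conditions of Definition~\ref{def:efx-friendly} with $\lambda = 4$. The key estimates are: for $i \in N_0$, both the largest and second-largest chores of $X_i$ have price at most $\rho$, so $\p(X_i) \le \p_{-2}(X_i) + 2\rho \le 4\rho$; for $i \in N_H$, the second-largest chore of $X_i$ is low, so $\p(S_i) \le \p_{-2}(X_i) + \rho \le 3\rho$; for any $k \in N_0$, the MPB property combined with $\rho = \min_i \p(X_i)$ gives $d_i(X_k) \ge \p(X_k) \ge \rho$; and for any $h \in N_H$, MPB together with $p_{j_h} > \rho$ gives $d_i(j_h) \ge \rho$. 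These combine into conditions (i) and (ii) as $d_i(X_i) \le 4\rho \le 4\min\{d_i(X_k), d_i(j_h)\}$, and into conditions (iii) and (iv) as $d_i(S_i) \le 3\rho \le 3\min\{d_i(X_k), d_i(j_h)\}$. Hence $X$ is $4$-EFX-friendly, and Theorem~\ref{thm:framework} returns a $4$-EFX allocation in polynomial time.

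The hard part is the preprocessing that guarantees at most one high chore per agent. Without it, a direct $2$-pEF$2$ rounding can leave an agent with two high chores, in which case $\p(S_i)$ is not bounded by a constant multiple of $\rho$ and condition (iii) of Definition~\ref{def:efx-friendly} breaks. Redistributing the high-priced chores while maintaining MPB and the pEF$1$-type bound on the remaining low bundle is precisely the delicate matching step of~\cite{GMQ25} that underlies the $2n$ bound on the number of affected chores; I would invoke this as a black box. Everything else, including the chore-swap phase that in~\cite{GMQ25} required intricate price-tracking arguments, is absorbed cleanly by Theorem~\ref{thm:framework}, which is the simplification this framework provides for the $4$-EFX result.
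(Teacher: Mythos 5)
Your high-level plan---round the ER equilibrium, redistribute the high-priced chores, verify Definition~\ref{def:efx-friendly} with $\lambda=4$, and invoke Theorem~\ref{thm:framework}---matches the paper. But the step you yourself flag as the hard part and invoke as a black box does not exist in the form you need, and this is a genuine gap. Under the parameters for which the rounding guarantees of \cite{GMQ25} are actually proved (earning requirements $e_i=1$ and a uniform earning cap $\beta=\tfrac12$), each high chore ($p_j>\tfrac12$) contributes exactly $\tfrac12$ to the total earning of $n$, so the number of high chores is bounded only by $2n$, not by $n$. Hence no reassignment can guarantee that every agent receives at most one high chore: when $|H|>n$, some agent necessarily gets two. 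The paper's proof embraces this rather than avoiding it. It computes an \emph{EFX allocation} $Z$ of the set $H$ of high chores, places the agents with $|Z_i|\ge 2$ into $N_0$ rather than $N_H$, and shows they are already $4$-EFX via the chain $d_i(Y_i)\le p(Y_i\setminus H)+2\hat{d}_i(Z_i)\le 1+2\,d_i(Z_k)\le 4\,d_i(Z_k)$, using EFX-ness of $Z$ and $p(Z_k)\ge\tfrac12$. It also exploits that, since $Z$ is EFX, either no agent gets zero high chores or no agent gets two, which splits the verification of conditions (i)--(ii) into two clean cases. None of this is recoverable from the premise ``each agent has at most one high chore,'' so your verification of condition (iii) for $N_H$ (and of (i)--(ii) for agents holding several high chores) does not go through as written.

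A second, smaller problem is the bookkeeping with $\rho$. You use $\rho$ simultaneously as the earning requirement, as the earning cap, as the high/low threshold, and as $\min_i \p(X_i)$ after rounding. These are different quantities: the rounding only guarantees $\p(X_i)\ge\tfrac12$ while $e_i=1$ (property (v) in the paper), so $\p_{-2}(X_i)\le 2\rho$ with $\rho=\min_i\p(X_i)$ does not follow from $2$-pEF$2$ measured against the earning requirement, and if one tracks the constants honestly the approximation factor degrades beyond $4$. The paper avoids this by working directly with the rounding properties (i)--(v), which bound $p(X_i\setminus H)$ by $\tfrac12$, $1$, or $\tfrac32$ according to $|X_i\cap H|$, and by measuring everything against the fixed thresholds $\tfrac12$ and $1$ rather than a floating $\rho$.
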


When $m\le 2n$, an EFX allocation can be computed in polynomial time \cite{mahara2023efxmatching,GMQ25}; see \cref{thm:small} for a simplified proof using our framework. Hence, we assume $m > 2n$. We then compute an ER equilibrium $(x^0,p)$ with each agent having an earning requirement $e_i=1$ and a uniform earning restriction of $\beta=\frac{1}{2}$ on each chore. Let $H=\{j\in M : p_j > \frac{1}{2}\}$ and $L = \{j\in M : p_j \le \frac{1}{2}\}$ be the set of high price and low priced chores, respectively. We scale the disutilities to prices so that $d_{ij} = p_j$ for $j\in x^0_i$ and $d_{ij} \ge p_j$ for $j\notin x^0_i$.

Garg, Murhekar, and Qin \cite{GMQ25} showed that such a solution can be rounded in polynomial time to an integral MPB allocation $(X,p)$ with the following properties:
\begin{itemize}
\item[(i)] For any $i$, $|X_i \cap H| \le 2$.
\item[(ii)] For any $i$ with $|X_i \cap H| = 2$, $p(X_i \setminus H) \le \frac{1}{2}$.
\item[(iii)] For any $i$ with $|X_i \cap H| = 1$, $p(X_i \setminus H) \le 1$.
\item[(iv)] For any $i$ with $|X_i \cap H| = 0$, $p(X_i \setminus H) \le \frac{3}{2}$.
\item[(v)] For any $i$, $\p(\x_i) \ge \frac{1}{2}$.
\end{itemize}

Due to the earning restriction on the chores, it must be that $|H| \le 2n$. Let $Z$ be an EFX allocation of the chores in $H$. Let $Y$ be the allocation given by $Y_i = (X_i \setminus H) \cup Z_i$, i.e.., the high price chores are re-allocated according to $Z$. We define $N_H = \{i\in N : |Z_i| = 1\}$ to be the set of agents who receive a single high price chore. Let $Z_i = \{j_i\}$ for each $i\in N_H$.

Define  $N_0 = N \setminus N_H$. Note that $N_0 = N_L \sqcup N_H^2$, where $N_L = \{ i\in N : Z_i = \emptyset \}$, and $N_H^2 = \{ i\in N : |Z_i| \ge 2\}$. Moreover, since $Z$ is EFX, either $N_L = \emptyset$ or $N^2_H = \emptyset$. As shown in \cite{GMQ25}, it is possible to allocate $Z$ so that $p(Y_i \setminus H) \le 1$ for every $i\in N_H^2$.

We now prove that $Y$ is $\lambda$-EFX-friendly for $\lambda = 4$ using the following observations.
\begin{enumerate}
\item Suppose $N_H^2 = \emptyset$. Then for all $i\in N_0$, we have $d_i(Y_i) = p(Y_i) \le \frac{3}{2}$ using property (iv). For $k\in N_0$, this implies $d_i(Y_i) \le 3\cdot \p(Y_k) \le 3\cdot d_i(Y_k)$ using property (iv). For $h\in N_H$, it implies $d_i(Y_i) \le 3\cdot \p(j_h) \le 3\cdot d_i(j_h)$ using the fact that $j_h \in H$. This establishes that $Y$ satisfies conditions (i) and (ii) of \cref{def:efx-friendly} for $\lambda = 3$, and hence also for $\lambda = 4$.

\item On the other hand, suppose $N_L = \emptyset$. Then for all $i\in N_0$ and $k\in N$, we have 
\begin{align*}
    d_i(Y_i) &= d_i(Y_i\setminus H) + d_i(Z_i) \\
    &\le p(Y_i \setminus H) + d_i(j_0) + d_i(Z_i\setminus j_0) \tag{here, $j_0 = \arg\min_{j\in Z_i} d_{ij}$} \\
    &\le 1 + 2\cdot \d_i(Z_i) \tag{since $|Z_i| \ge 2$} \\
    &\le 1 + 2\cdot d_i(Z_k) \tag{since $Z$ is EFX} \\
    &\le 4\cdot d_i(Z_k) \tag{since $d_i(Z_k) \ge p(Z_k) \ge \frac{1}{2}$}
\end{align*}
For $k\in N_0$, this implies $d_i(Y_i) \le 4\cdot d_i(Z_k) \le 4\cdot d_i(Y_k)$. For $h\in N_H$, it implies $d_i(Y_i) \le 4\cdot d_i(Z_k) = 4\cdot d_i(j_k)$. This establishes that $Y$ satisfies conditions (i) and (ii) of \cref{def:efx-friendly} for $\lambda = 4$.
\item For $i\in N_H$, let $S_i = Y_i \setminus H = X_i \setminus H$. We have $d_i(S_i) = p(S_i) \le \frac{3}{2}$ due to property (iii). Thus for $k\in N_0$, we have $d_i(S_i) \le \frac{3}{2} \le 3\cdot \p(Y_k) \le d_i(Y_k)$ due to property (iv). Moreover, for $h\in N_H$ we have $d_i(S_i) \le \frac{3}{2} < 3\cdot p_{j_h} \le 3\cdot d_i(j_h)$ using $j_h \in H$. This establishes that $Y$ satisfies conditions (iii) and (iv) of \cref{def:efx-friendly} for $\lambda = 4$.
\end{enumerate}
This proves that $Y$ is $4$-EFX-friendly. \cref{thm:framework} then implies that a $4$-EFX allocation can be computed in polynomial time from $Y$ using \cref{alg:main}.

\begin{remark}\label{rem:4efxsimplification}
We simplify the algorithm of \cite{GMQ25} through our approach. In \cite{GMQ25}, the order of performing chore swaps involving agents in $N_H$ is determined as follows. First, the set of chores $H' = \{j_i : i\in N_H\}$ are re-allocated to the agents in $N_H$ through a specific matching of $H'$ to $N_H$ (see Lemma 10 in the full version \cite{GMQ25arxiv}). This matching is obtained by writing an LP for min-cost matching with the weight of each edge $(i,j)\in N_H\times H'$ set as $\log d_{ij}$. The dual variables of this LP correspond to a set of chore prices $q\in R^{|H'|}$. Then, chore swaps involving agents $i\in N_H$ are carried out in non-decreasing order of $q_{j_i}$. Instead, through our approach, the re-allocation and the order of swaps is determined through a simple round-robin allocation of $H'$ to $N_H$. This also improves run-time by using a linear-time algorithm instead of solving an LP for matching.
\end{remark}

\subsection{Bivalued Instances}\label{sec:bivalued}

Recall that an instance is said to be bivalued if there exist $a,b \in  \R_{>0}$ such that $d_{ij}\in \{a,b\}$ for all $i\in N, j\in M$. We scale the disutilities so that $d_{ij} \in \{1,k\}$ for all $i,j$, where $k = \max\{b/a, a/b\}$. We refer to such an instance as a $\{1,k\}$-bivalued instance.

For $\{1,k\}$-bivalued instances, \citet{wu25bivalued} proved that a $(2-1/k)$-EFX and PO allocation can be computed in polynomial time. This improved the previous known guarantees of $3$-EFX and PO \cite{GMQ25} and $O(n)$-EFX (without PO) \cite{zhou22efx}. We re-prove this result using our framework.

\begin{theorem}[\cite{wu25bivalued}]\label{thm:bivalued}
For any chore division instance with $\{1,k\}$-bivalued disutilities, a $(2-1/k)$-\efx and \po allocation always exists, and can be computed in polynomial time.
\end{theorem}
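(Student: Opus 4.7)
The plan is to apply the framework of \cref{thm:strong} to a pEF1 MPB initial allocation, exploiting the $\{1,k\}$-bivalued structure to sharpen the approximation factor from $\lambda = 2$ (as in \cref{lem:2ef}) to $\lambda = (2k-1)/k$, and then to argue separately that PO is preserved through the chore swaps.

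I would first invoke the polynomial-time algorithms for bivalued chores of \cite{ebadian2021bivaluedchores, Garg_Murhekar_Qin_2022} to compute an MPB allocation $(X, p)$ that is pEF1; by \cref{thm:fpo} and \cref{lem:pEF1impliesEF1} this allocation is EF1 and PO. A preliminary structural lemma on bivalued MPB allocations would globally normalize prices so that $\alpha_i = 1$ for every agent and $p_j \in \{1, k\}$ for every chore. Under this normalization, a $k$-priced chore has disutility exactly $k$ for every agent (hence is MPB for everyone), whereas a $1$-priced chore has disutility $1$ for its holder and $k$ for everyone else. I would then verify that $X$ is weakly $(2-1/k)$-EFX-friendly by setting $j_i \in \arg\max_{j\in X_i} d_{ij}$, $S_i = X_i \setminus \{j_i\}$, $N_H = \{i : p_{j_i} = k,\ |X_i| \ge 2\}$, and $N_0 = N \setminus N_H$. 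For $i \in N_H$, every chore of $S_i$ is $1$-priced with disutility $1$ for $i$, so $S_i \subseteq L_i$ and in particular $S_i \cap L_i \ne \emptyset$. Properties (i)--(iv) of \cref{def:weak-efx-friendly} with $\lambda = (2k-1)/k$ should then follow from combining pEF1 (which yields $p(S_i) \le \min_h p(X_h)$) with the bivalued $k$-gap (which yields $d_i(j_h) \ge k$ for every $h \in N_H$); the extra factor of $k$ on the right-hand sides is precisely what shrinks $(\lambda-1)$ to $(k-1)/k$ in properties (iii) and (iv), and symmetric reasoning gives (i) and (ii) for $i \in N_0$. Applying \cref{thm:strong} then yields a $(2-1/k)$-EFX allocation.

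To upgrade this to PO, I would argue that the chore swaps of \cref{alg:main} preserve MPB in the bivalued setting. Each swap $(i,\ell)$ transfers the $k$-priced chore $j_i$ to $\ell$, and since every agent has disutility $k$ for a $k$-priced chore, $j_i$ is on MPB for $\ell$. The bundle $X_\ell$ transferred to $i$ consists of chores that $\ell$ held on MPB; using that $\ell$ is the agent $i$ envies the most together with the bivalued normalization, I would argue that each chore in $X_\ell$ must also lie in $i$'s MPB set, since otherwise a $1$-priced chore with $d_{ij}=k$ in $X_\ell$ would push $d_i(X_\ell)$ well above the envy threshold that made $i$ choose $\ell$. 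Hence the final allocation remains an MPB allocation, and is PO by \cref{thm:fpo}.

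The hard part will be (i) rigorously verifying the four inequalities of \cref{def:weak-efx-friendly} under the normalization: a case split depending on whether $\min_h p(X_h)$ is of order $1$ or of order $k$ seems necessary, and the case of small $\min_h p(X_h)$ may require a preprocessing step that moves chores so that any bundle containing a $k$-priced chore also contains a $1$-priced chore, while preserving both MPB and pEF1; and (ii) the PO-preservation step, where one must argue that the envy-based choice of the swap partner $\ell$ in \cref{alg:main} forces $X_\ell$ to lie entirely in $i$'s MPB set under the bivalued normalization.
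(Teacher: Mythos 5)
Your overall route is the same as the paper's: compute a pEF1 MPB allocation for the bivalued instance, show it is weakly $(2-1/k)$-EFX-friendly, invoke \cref{thm:strong}, and argue that the chore swaps preserve MPB to get PO. However, there is a genuine gap at exactly the point you flag as ``the hard part,'' and your proposed fix points in the wrong direction. The linchpin of the paper's argument is that one may assume $X$ is \emph{not} already $(2-1/k)$-EFX, and that this forces $\rho := \min_{h} \p(X_h) < k$: if $X$ is not $(2-1/k)$-EFX it is not $(2-1/k)$-\pefx by \cref{lem:pEF1impliesEF1}, so some agent $i$ has $\hat{p}(X_i) > (2-\frac{1}{k})\cdot\p(X_h) \ge (2-\frac{1}{k})\cdot\rho$, while pEF1 gives $\hat{p}(X_i) \le \p_{-1}(X_i) + (k-1) \le \rho + k - 1$; these are compatible only when $\rho < k$. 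This single inequality is what makes everything else work: it guarantees $\p(S_i) \le \rho < k$, so every chore of $S_i$ is $1$-priced (this is what justifies your claim $S_i \subseteq L_i$, which is otherwise false --- an agent in $N_H$ could a priori hold two $k$-priced chores); it gives $\hat{p}(S_i) \le \max\{0,\rho-1\}$; and it yields $\frac{\rho-1}{\rho} < 1 - \frac{1}{k}$, which is precisely the $(\lambda-1)$ factor needed in properties (iii) and (iv) of \cref{def:weak-efx-friendly}. Without it, those properties simply fail when $\rho \ge k$, since then $\frac{\rho-1}{\rho} \ge 1-\frac{1}{k}$.

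Your proposed case split worries about \emph{small} $\min_h \p(X_h)$ and suggests a chore-rearranging preprocessing step; this is the wrong regime (small $\rho$ is harmless, as $\max\{0,\rho-1\}$ is then tiny) and the preprocessing is unnecessary. The correct and much simpler resolution is: if $\rho \ge k$, the allocation is already $(2-1/k)$-\pefx and hence $(2-1/k)$-EFX and there is nothing to do; otherwise $\rho < k$ and the friendliness properties follow as above. Two smaller issues: your normalization claim that a $1$-priced chore has disutility $k$ for every non-holder is not implied by the MPB condition (it could have disutility $1$ for non-holders as well), though nothing essential rests on it; and your MPB-preservation argument for PO is only a sketch --- making ``$X_\ell$ lies in $i$'s MPB set'' rigorous again leans on $\rho < k$ to bound $d_i(X_\ell) < k$, a step the paper delegates to a cited lemma rather than reproving.
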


For $\{1,k\}$-bivalued instances, it is known that an MPB allocation $(X,\p)$ that is pEF1 can be computed in polynomial time \cite{ebadian2021bivaluedchores,Garg_Murhekar_Qin_2022}. Moreover, all prices are either $1$ or $k$.  If $X$ is $(2-1/k)$-EFX, then we are done. Therefore we assume that $X$ is not $(2-1/k)$-EFX. 
We prove that $X$ is weakly $(2-1/k)$-EFX-friendly, as per \cref{def:weak-efx-friendly}.

\begin{lemma}
The above allocation $X$ is weakly $(2-1/k)$-\efx-friendly.
\end{lemma}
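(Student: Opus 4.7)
My plan is to exploit the bivalued structure to classify each agent by their MPB ratio $\alpha_i$ and bundle composition, and then show that only a narrow class of agents can fail $(2-1/k)$-\efx. Let $a_i$ and $b_i$ denote the number of price-$k$ and price-$1$ chores in $X_i$ respectively. MPB forces $d_{ij}/p_j$ to be constant on $X_i$: if $\alpha_i = 1/k$ then every $j \in X_i$ has $(d_{ij}, p_j) = (1, k)$, and if $\alpha_i = 1$ then $d_{ij} = p_j \in \{1, k\}$ for $j \in X_i$; in either case $d_{ij} \geq \alpha_i p_j$ for $j \notin X_i$.

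First I will show that agents with $\alpha_i \neq 1$, or with a ``pure'' bundle ($a_i = 0$ or $b_i = 0$), are already $1$-\efx in $X$ via a standard pEF1 + MPB computation, and place these in $N_0$. The remaining ``mixed'' agents have $\alpha_i = 1$, $a_i \geq 1$, and $b_i \geq 1$; for such an $i$, $j_i$ is a price-$k$ chore and $S_i$ contains at least one price-$1$ chore, which has minimum disutility $1$ for $i$, so $S_i \cap L_i \neq \emptyset$ as Definition~\ref{def:weak-efx-friendly} requires. A direct calculation using the pEF1 bound $d_i(X_h) \geq p(X_h) \geq p(S_i) = (a_i-1)k + b_i$ then shows that a mixed agent $i$ fails $(2-1/k)$-\efx if and only if $a_i = 1$ and $b_i \leq k - 1$. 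I set $N_H$ to be exactly these failing agents and $N_0$ the rest. For $i \in N_H$, properties (iii) and (iv) reduce to $b_i - 1 \leq (1 - 1/k) \cdot b_i$ and $b_i - 1 \leq (1 - 1/k) \cdot k = k - 1$, both immediate from $b_i \leq k - 1$; here I use $d_i(X_k) \geq b_i$ from pEF1 and $d_i(j_h) = k$ from MPB with $\alpha_i = 1$.

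The main obstacle is verifying property (ii) for $i \in N_0$, namely $\hat{d}_i(X_i) \leq (2-1/k) \cdot d_i(j_h) = 2k - 1$. For pure $N_0$ agents, this follows readily from pEF1: $\hat{d}_i(X_i) \leq p(S_i) \leq p(X_h) = k + b_h \leq 2k - 1$ for any $h \in N_H$. The delicate case is a ``large'' mixed $N_0$ agent (with $a_i \geq 2$, or $a_i = 1$ and $b_i \geq k$), whose $\hat{d}_i(X_i)$ can exceed $2k - 1$. To rule this out, I will show such an agent cannot coexist with $N_H \neq \emptyset$: its pEF1 constraint $p(S_i) \geq k + 1$ forces every bundle to have $p(X_h) \geq k + 1$, and combining this with $p(X_h) = k + b_h \leq 2k - 1$ for a candidate $h \in N_H$ yields a lower bound on $d_h(X_{h'})$ strong enough to imply that $h$ is itself $(2-1/k)$-\efx, contradicting $h \in N_H$. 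Once this case is excluded, $N_0$ contains only pure-bundle agents when $N_H \neq \emptyset$, and properties (i) and (ii) follow from pEF1, completing the verification.
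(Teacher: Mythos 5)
Your proposal is correct in its essentials but takes a genuinely different route from the paper's. The paper keeps the partition coarse and price-based: it sets $\rho=\min_{h}\p(X_h)$, uses the assumption that $X$ is not $(2-1/k)$-\efx (hence not $(2-1/k)$-\pefx) together with $\hat p(X_i)\le \p_{-1}(X_i)+k-1$ to derive the single inequality $\rho<k$, defines $N_H=\{i: p_{j_i}=k\}$ and $N_0=\{i:p_{j_i}=1\}$, and then all four properties follow from $\p(S_i)\le\rho<k= p_{j_h}$ and $\rho\le \p(X_{k'})$, with the MPB ratios $\alpha_i$ cancelling uniformly. You instead classify agents by bundle composition $(a_i,b_i)$ and MPB ratio, put only the agents that actually fail $(2-1/k)$-\efx into $N_H$, and exclude the problematic ``large mixed'' $N_0$ agents by contradiction. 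That contradiction step is sound and is really the same mechanism as the paper's $\rho<k$: a large bundle forces $\p(X_h)\ge k+1$ for every $h$ via pEF1, which makes every candidate $N_H$ agent $(2-1/k)$-\efx. The paper's version is shorter because $\rho<k$ rules out large bundles for all agents at once and avoids the case split; an incidental advantage of your finer partition is that your $N_H$ agents are mixed, so the requirement $S_i\cap L_i\neq\emptyset$ of \cref{def:weak-efx-friendly} is immediate.

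One loose step: in verifying property (ii) you write $(2-1/k)\cdot d_i(j_h)=2k-1$, i.e., $d_i(j_h)=k$, for $i\in N_0$. This holds only when $\alpha_i=1$; a pure price-$k$ agent with $\alpha_i=1/k$ (every chore in $X_i$ has $d_{ij}=1$ and $p_j=k$), whom you explicitly place in $N_0$, can have $d_i(j_h)=1$, and then the chain $\hat d_i(X_i)\le \p(S_i)\le 2k-1$ does not deliver property (ii). The patch is to carry $\alpha_i$ through rather than dropping it: $\hat d_i(X_i)=\alpha_i\,\hat p(X_i)\le\alpha_i\, \p(X_h)\le \alpha_i(2k-1)=(2-1/k)\,\alpha_i\, p_{j_h}\le(2-1/k)\,d_i(j_h)$, which is exactly the role of the paper's closing sentence ``using the MPB condition with the above observations.'' With that one-line fix your argument is complete.
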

\begin{proof}
Let $\rho = \min_{i\in N} \p(X_i)$ be the earning of the least earner. We first prove that $\rho < k$. Since $X$ is not $(2-1/k)$-EFX, it is not $(2-1/k)$-pEFX. Therefore, there exist agents $i, h$ such that $\hat{p}(X_i) > (2-\frac{1}{k})\cdot p(X_h)$. Using the fact that $(X,p)$ is pEF1, we have $\hat{p}(X_i) \le p_{-1}(X_i) + k - 1 \le \rho + k-1$. Moreover, $p(X_h) \ge \rho$ by definition of $\rho$. Putting the above observations together, we get:
\[
\rho + k -1 \ge \ \hat{p}(X_i) > (2-\frac{1}{k})\cdot p(X_h) \ge (2-\frac{1}{k})\cdot \rho,
\]
which can only hold if $\rho < k$.

For each $i\in N$, let $j_i = \arg\max_{j\in X_i} p_j$ be the highest price chore in $X_i$, and let $S_i = X_i \setminus \{j_i\}$. Since $X$ is pEF1, we know $\p(S_i) = \p_{-1}(X_i) \le \rho$, for all $i\in N$. We define $N_0 = \{i\in N : p_{j_i} = 1\}$, and $N_H = \{i\in N : p_{j_i} = k\}$. We make the following observations:

\begin{enumerate}
\item For any $i\in N_0$, we have $\p(X_i)\le \rho+1$, since $X$ is pEF1 and $p_{j_i} = 1$. Then $\hat{p}(X_i) \le \rho$.
\item For any $i\in N_H$, we have $p(S_i) \le \rho$, since $X$ is pEF1. Then $\hat{p}(S_i) \le \max\{0,\rho-1\}$.
\item For any $i\in N_0$, we have $p(X_i) \ge \rho$ by definition of $\rho$.
\item For any $i\in N_H$, we have $p(j_i) = k > \rho$.
\end{enumerate}

We therefore have:
\begin{enumerate}[label=(\roman*)]
\item Using observations (1) and (3), we have $\hat{p}(X_i) \le \rho \le p(X_h)$ for any $i,h \in N_0$.
\item Using observations (1) and (4), we have $\hat{p}(X_i) \le \rho < k = p(j_h)$ for any $i\in N_0$ and $h\in N_H$.
\item Using observations (2) and (3), we have $\hat{p}(S_i) \le \max\{0,\rho-1\} \le \frac{\max\{0,\rho-1\}}{\rho} \cdot p(X_h)$ for any $i\in N_H$ and $h\in N_0$.
\item Using observations (2) and (4), we have $\hat{p}(S_i) \le \max\{0,\rho-1\} \le \frac{\max\{0,\rho-1\}}{k} \cdot p(j_h)$ for any $i, h\in N_H$. 
\end{enumerate}
Note that since $\rho < k$, we have $\frac{\max\{0,\rho-1\}}{k} < \frac{\max\{0,\rho-1\}}{\rho} < 1-
\frac{1}{k}$. Using the MPB condition with the above observations shows that $X$ satisfies the conditions of \cref{def:weak-efx-friendly} for $\lambda=2-1/k$.
\end{proof}

Since $X$ is weakly $(2-1/k)$-EFX-friendly, \cref{thm:strong} implies that we can compute a $(2-1/k)$-EFX allocation in polynomial time from $X$. Moreover, we can show that $X$ is PO by arguing that chore swaps maintain the MPB condition in the case of bivalued instances. Specifically, we show that in an $(i,\ell)$ chore swap in allocation $Y$, $Y_\ell \subseteq \mpb_i$ (see e.g. Lemma 22 of \cite{GMQ25arxiv} or Lemma 3.11 of \cite{wu25bivalued}. This proves \cref{thm:bivalued}. 

\subsection{Small Number of Chores}\label{sec:small-num}

When the number of chores is at most twice the number of agents, an EFX allocation can be computed in polynomial time. This was first proved by \citet{mahara2023efxmatching} using matching-based techniques, and later by \cite{GMQ25} who gave a faster and simpler algorithm using chore swaps. We re-prove this result using our framework.

\begin{theorem}[\cite{mahara2023efxmatching,GMQ25}]\label{thm:small}
For any chore division instance with additive disutilities where $m\le 2n$, an \efx allocation always exists and can be computed in polynomial time.
\end{theorem}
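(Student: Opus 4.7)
The plan is to construct a weakly $1$-\efx-friendly allocation (Definition~\ref{def:weak-efx-friendly}) and then invoke Theorem~\ref{thm:strong} to extract an exact EFX allocation. For $\lambda=1$, conditions (iii) and (iv) of the definition reduce to $\hat d_i(S_i)\le 0$, equivalently $|X_i|\le 2$ for every $i\in N_H$; combined with the side requirement $S_i\cap L_i\neq\emptyset$, this forces every $N_H$ agent to hold exactly two chores, one of which is a globally least-disutility chore. The case $m\le n$ is immediate: any bipartite matching assigning at most one chore per agent yields an EFX allocation since $d_i(X_i\setminus\{j\})=0$ for every $j\in X_i$, and no framework is needed.

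For $n<m\le 2n$, set $r=m-n$, the number of agents that must hold two chores in any capacity-2 allocation. I would construct the initial allocation $Y$ in two stages. First, compute a maximum bipartite matching $F$ between agents and $\bigcup_i L_i$ where each agent $i$ is adjacent to her favorite set $L_i$; this matches a maximal subset of agents to pairwise distinct favorite chores. Second, distribute the remaining chores so each agent has one or two chores, preferentially placing second chores on top of favorites assigned by $F$. Designate $N_H$ as the 2-chore agents whose bundle contains a favorite (automatically satisfying $S_i\cap L_i\neq\emptyset$), and $N_0$ as the rest. For 1-chore $N_0$ agents the envy conditions hold trivially since $\hat d_i(X_i)=0$. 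When $|F|<r$, some 2-chore agents must be placed in $N_0$; for those, the envy conditions $\hat d_i(X_i)\le d_i(X_k)$ and $\hat d_i(X_i)\le d_i(j_h)$ would be enforced by choosing the second-chore distribution via a min-cost matching weighted to keep their maximum single-chore disutility bounded by other bundles. Applying Algorithm~\ref{alg:main} to $Y$ then returns an EFX allocation in polynomial time by Theorem~\ref{thm:strong}.

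The main obstacle is the case $|F|<r$: one must argue that the residual chore distribution can always be arranged so the envy conditions hold for the forced 2-chore $N_0$ agents. I expect this to follow from a Hall-type argument combined with the slack afforded by $m\le 2n$: the deficiency in the favorite matching limits how many agents end up in this role, and the chore budget provides enough flexibility to place high-disutility chores with $N_H$ agents so that the $j_h$ values dominate the $N_0$ bundles, validating condition (ii). Once this structural lemma is established, both existence and polynomial-time computation follow directly from Theorem~\ref{thm:strong}.
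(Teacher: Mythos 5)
Your skeleton is the same as the paper's (build a weakly $1$-EFX-friendly allocation, then invoke \cref{thm:strong}), and your observation that for $\lambda=1$ conditions (iii) and (iv) of \cref{def:weak-efx-friendly} force $|X_i|\le 2$ for $N_H$ agents is correct. But the proposal has a genuine, self-acknowledged gap: the entire case $|F|<r$ rests on an unproved ``structural lemma'' that the residual chores can be distributed so that the forced two-chore agents in $N_0$ satisfy conditions (i) and (ii). For $\lambda=1$ those conditions read $\hat d_i(X_i)\le \min_k d_i(X_k)$ and $\hat d_i(X_i)\le \min_h d_i(j_h)$, which for a two-chore agent are exact EFX-type constraints against \emph{every} other bundle; a Hall-type counting argument on the deficiency of $F$ does not obviously deliver them, and you give no construction that does. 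As written, the proof is incomplete.

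The gap is self-inflicted, and the paper's proof shows how to avoid it entirely. Take $N_H=N$ and $N_0=\emptyset$, so conditions (i) and (ii) are vacuous and no agent ever needs to be ``forced into $N_0$.'' Then the only thing to arrange is that every bundle has size at most $2$ and that, for two-chore agents, the chore surviving in $S_i$ is the cheaper of the two for $i$ (this is what the side condition $S_i\cap L_i\neq\emptyset$ is actually used for in the proof of \cref{thm:strong}: it guarantees $\hat d_i(X_i)=\hat d_i(S_i)+d_i(X_i\setminus S_i)$, and it is read as satisfied when $S_i=\emptyset$). A two-phase round-robin achieves this with no matching at all: in the order $r,r-1,\dots,1$ the first $r=m-n$ agents each pick their least-disutility remaining chore $e_i$, and then in the order $1,\dots,n$ every agent picks their least-disutility remaining chore $j_i$. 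Since $e_i$ is chosen before $j_i$ from a larger pool, $d_i(e_i)\le d_i(j_i)$, so $S_i=\{e_i\}$ contains the bundle minimum, $\hat d_i(S_i)=0$, and conditions (iii) and (iv) hold with $\lambda-1=0$. Your maximum-matching construction insists that $S_i$ contain a \emph{globally} least-disutility chore of $i$, which is stronger than what the framework needs and is precisely what creates the deficient-matching case you cannot close.
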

\begin{proof}
If $m\le n$, then any allocation $Y$ with $|Y_i| \le 1$ is EFX. Let us therefore assume $m>n$, and let $r = m-n$. Note that $r\in[n]$.

We construct a weakly $\lambda$-EFX-friendly allocation $Y$ for $\lambda=1$ using a simple, two-phase round-robin algorithm. We set $M' = M$. First, proceeding in the order $r, r-1, \dots, 1$, each agent $i$ iteratively picks their least disutility chore $e_i$ from the remaining chores $M'$. 
Then, proceeding in the order $1$ to $n$, each agent $i$ picks their least disutility chore $j_i$ in $M'$. Let $Y$ be the resulting allocation, i.e., $Y_i = \{e_i, j_i\}$ for $i\in [r]$, and $Y_k = \{j_k\}$ for $k\in N\setminus [r]$. 

We show that $Y$ is weakly $1$-EFX-friendly. Let $N_H = N$ and $N_0 = \emptyset$. Define $S_i = \{e_i\}$ for $i\in [r]$ and $S_i = \emptyset$ for $i\in N\setminus [r]$. By definition, $\d_i(S_i) = 0 \le 0\cdot d_i(j_h)$ for any $i,h\in N_H$. Thus, $Y$ is weakly $1$-EFX-friendly as per \cref{def:weak-efx-friendly}. \cref{thm:strong} then implies that an EFX allocation can be computed in polynomial time from $Y$.
\end{proof}

\section{Discussion}
In this paper, we established the existence of $2$-EFX allocations of indivisible chores to agents with additive disutilities. This was previously known only in special cases like bivalued instances \cite{wu25bivalued} or $n=3$ agents \cite{afshinmehr2024approximateefxexacttefx}, and represents a substantial improvement over the previous known existence of $4$-EFX allocations \cite{GMQ25}. We achieve this through a general, unifying framework for obtaining approximate-EFX allocations. Moreover, we obtain existing results \cite{afshinmehr2024approximateefxexacttefx,wu25bivalued,GMQ25} through our approach.

There remain many important questions for future work, such as improving the approximation factor beyond $2$. Another interesting question is if constant-EFX allocations can be computed in polynomial time. Currently, constant-EFX allocations can only be obtained by initializing our framework with a price-EF1 and PO allocation, or a suitable rounding of the earning restricted equilibrium --- both of which are not known to be polynomial time computable. However, since our framework is efficient, any advancement on the computation of such allocations would imply a polynomial time algorithm for constant-EFX. Lastly, it is interesting to see if the framework can be applied for other restricted domains, for goods, or for other concepts like charity or surplus.


\bibliographystyle{plainnat}
\bibliography{references}

\end{document}